\renewcommand\leq\leqslant
\renewcommand\geq\geqslant
\newtheorem{theorem}{Theorem}
\newtheorem{lemma}[theorem]{Lemma}
\newcommand{\fullversion}[1]{#1}
\newcommand{\confversion}[1]{}
\newcommand{\appendixproof}[2]{
#2
}
\newcommand{\Pb}[4]{%
\begin{center}
  \begin{tabular}{|l|}%
  \hline
    \begin{minipage}[c]{.95\textwidth}
      \smallskip%
      \par\noindent%
      \textsc{#1}
      \medskip%
      \par\noindent%
      $\bullet$
      \textbf{\textsf{Input}}: #2%
      \par\noindent%
      $\bullet$
      \textbf{\textsf{#4}}:
      #3%
      \smallskip%
      \par\noindent%
    \end{minipage}
  \\\hline
  \end{tabular}%
\end{center}
}%
\begin{document}

\begin{frontmatter}

\title{Parameterized Complexity and Approximation Issues for the Colorful Components Problems}

\author[rd]{Riccardo Dondi}
\ead{riccardo.dondi@unibg.it}
\address[rd]{Dipartimento di Scienze umane e sociali,  Universit\`a degli Studi di Bergamo, ITALY}

\author[fs]{Florian Sikora}
\ead{florian.sikora@dauphine.fr}
\address[fs]{Universit\'{e} Paris-Dauphine, PSL Research University, CNRS UMR 7243, LAMSADE, 75016 PARIS, FRANCE}



\begin{abstract}
The quest for colorful components 
(connected components where each color is associated with at most one vertex) 
inside a vertex-colored graph has been widely considered in the last ten years. 
Here we consider two variants, Minimum Colorful Components (MCC) and Maximum Edges in transitive Closure (MEC),
introduced in 2011 in the context of orthology gene identification in bioinformatics. 
The input of both MCC and MEC is a vertex-colored graph.
MCC asks for the removal of a subset of edges, so that the
resulting graph is partitioned in the minimum number of colorful connected components;
MEC asks for the removal of a subset of edges, so that the
resulting graph is partitioned in colorful connected components and the number of edges in the transitive closure of such a graph is maximized.
We study the parameterized and approximation complexity of MCC and MEC, for general and restricted 
instances. 

For MCC on trees we show that the problem is basically equivalent to Minimum Cut on Trees, thus MCC is not approximable within factor $1.36 - \varepsilon$, it is fixed-parameter tractable and it admits a poly-kernel (when 
the parameter is the number of colorful components).
Moreover, we show that MCC, while it is polynomial time solvable on paths, 
it is NP-hard even for graphs with constant distance to disjoint paths number.
Then we consider the  parameterized complexity of MEC when parameterized by the number $k$ of edges in the transitive closure of a solution (the graph obtained by removing edges so that it is partitioned in colorful connected components).
We give a fixed-parameter algorithm for MEC paramterized by $k$ and, when the input graph is a tree, we give a poly-kernel.
\end{abstract}

\begin{keyword}
Colorful Components, Parameterized Complexity, Algorithms, Computational Biology.
\end{keyword}

\end{frontmatter}


\section{Introduction}

The quest for colorful components 
inside a vertex colored graph has been 
a widely investigated problem in the last years, 
with application for example in 
bioinformatics~\cite{DBLP:journals/tcbb/LacroixFS06,DBLP:conf/cpm/BrucknerHKNTU12,DBLP:journals/tcs/DondiFV13}. 
Roughly speaking, given a vertex-colored graph, the problem asks to find 
the colorful components of the graph,  that is connected components
that contain at most one vertex of each color. While most of the approaches have 
focused on the identification of a single connected colorful component. 
The identification of the minimum number of
colorful connected components that match 
a given motif has been considered only 
in~\cite{DBLP:journals/tcbb/BetzlerBFKN11,DBLP:journals/jda/DondiFV11}.

Here we consider a similar framework, where instead of looking for a single colorful component
inside a vertex-colored graph, we ask for a partition of the graph vertices 
in colorful components. 
This approach has been proposed in bioinformatics, and more specifically in comparative genomics. 
In this context, a fundamental task is to 
infer the relations between genes in different genomes and, more precisely, to infer which genes are orthologous.
Genes are orthologous when they originate via a speciation event\footnote{A speciation is a evolutionary process from which a biological species evolves into two new species.} from a gene of an ancestral genome.
In 2011, Zheng et al. proposed a graph approach aiming to identify disjoint orthology sets,
where each of such sets corresponds to a colorful component in the given graph~\cite{DBLP:conf/wabi/ZhengSLS11} and the colorful components associated with the orthology sets are disjoint.

Different combinatorial problem formulations, based on different objective functions, 
have been proposed and studied in this direction~\cite{DBLP:conf/wabi/ZhengSLS11,DBLP:journals/algorithmica/AdamaszekP15}.
Here, we considered two such approaches: 
\textsc{Minimum Colorful Components} (MCC) and \textsc{Maximum Edges in transitive Closure} (MEC).
Given a vertex-colored graph, both combinatorial problems 
ask for the removal of some edges 
so that the resulting graph is partitioned in colorful components, 
but with different objective functions. 
The former aims to minimize the number of connected colorful components,
while the latter aims to maximize the transitive closure of the resulting graph.
A related but different problem has been considered in~\cite{DBLP:conf/cpm/BrucknerHKNTU12},
where the objective function is the minimization of edge removal, so that 
the computed graph consists only of colorful components.
\fullversion{Note that in the problems studied in this paper, the number of removed edges is never part of the objective function.}

\paragraph{Previous Results}
Given a graph on $n$ vertices,
MCC is known not only to be NP-hard, but also not approximable within factor 
$O(n^{1/14 - \varepsilon})$ unless P=NP~\cite{DBLP:journals/algorithmica/AdamaszekP15}.
It is easy to see that the reduction leading to this inapproximability result 
implies also that MCC 
cannot be solved in time $n^{f(k)}$ for any function $f$, where $k$ is the number of colorful components. 
In the parameterized complexity vocabulary, it means that it is not in the XP class.

MEC is known to be APX-hard even when colored by at most three colors (while it is solvable in polynomial time for two colors), 
and, unless P=NP, it is not approximable 
within factor $O(n^{1/3 - \varepsilon})$ when the number of colors is arbitrary, even when the input graph is a tree where each color appears at most twice~\cite{DBLP:conf/iwoca/AdamaszekBP14}. 
A heuristic to solve MEC is presented in~\cite{DBLP:conf/wabi/ZhengSLS11}, while in~\cite{DBLP:conf/iwoca/AdamaszekBP14}, the authors present a polynomial-time $\sqrt{2 \cdot OPT}$ approximation algorithm.

\paragraph{Contributions and organization of the paper}
In this paper we investigate deeper the complexity of MCC and MEC. 
More precisely, we show in Section~\ref{sec:MCC:trees} that MCC on trees is essentially equivalent to \textsc{Minimum MultiCut} on Trees, thus MCC is not approximable within factor $1.36 - \varepsilon$ unless P=NP for any $\varepsilon > 0$, but 2-approximable, it is fixed-parameter tractable (but not in subexponential-time) and it admits a poly-kernel (when the parameter is the number of colorful components).
Moreover, in Section~\ref{sec:MCCDisjPath} we show that MCC is easily solvable in polynomial time on paths, while  it is not in XP class when parameterized  by the structural parameter Distance to Disjoint Paths.

Then we consider the  parameterized complexity of MEC with respect to the number $k$ of edges in the transitive closure of a solution. 
For this parameter  we give in Section~\ref{sec:FPTMEC} a parameterized algorithm, by reducing the problem to an exponential kernel.
We use a similar idea in Section~\ref{sec:kernelMECtrees}, to improve it to a poly-kernel for MEC when the input graph is a tree.
Finally, we show in Section~\ref{sec:MECDisjPath} that results similar to those of Section~\ref{sec:MCCDisjPath}, hold also for MEC.
A preliminary version of this work appeared in~\cite{DBLP:conf/cie/DondiS16}.

\section{Definitions}
\label{sec:Def}

In this section we introduce some preliminary definitions.
For any positive integer $x$, $[x]$ denotes the set $\{1,2,\ldots,x-1,x\}$. 
Consider a set of colors $C=\{c_1,\dots, c_q  \}$.
A $C$--colored graph $G=(V,E,C)$ is a graph where every vertex in $V$ is associated with a color in $C$; the color associated with a vertex $v \in V$ is denoted by $c(v)$.
If $\mathcal{C}$ is a class of graphs, the \emph{distance to  $\mathcal{C}$} of a graph $G$ is the minimum number of vertices to remove from $G$ to get a graph in  $\mathcal{C}$.
A connected component induced by a vertex set $V' \subseteq V$ is called \emph{a colorful component},
if it does not contain two vertices having the same color. 
If a graph has $t$ connected components where each component $i \in [t]$ has exactly $n_i$ vertices, the number of edges in its transitive closure is defined by $\sum_{i=1}^t \frac{n_i (n_i -1)}{2}$. 
\fullversion{In other words, for each connected component, it is the maximum number of possible edges
connecting vertices of that component.}

Next, we introduce the formal definitions of the optimization problems we deal with.

\Pb{Minimum Colorful Components (MCC)}{a $C$-colored graph $G=(V,E,C)$.}{remove a set of edges $E' \subseteq E$ such that each connected component in
$G' = (V, E \setminus E', C)$ is colorful, and the number of connected components of $G'$ is minimized.}{Output}


\Pb{Maximum Edges in transitive Closure (MEC)}{a $C$-colored graph $G=(V,E,C)$.}{remove a set of edges $E' \subseteq E$ such that each connected component in
$G' = (V, E \setminus E', C)$ is colorful, 
and the number of edges in the transitive closure of $G'$ is maximum.}{Output}

The parameterized versions of MCC and MEC are defined analogously (and abusively denoted with the same names),
with the addition in the input of an integer $k$, that denotes the number of connected components in $G'$ for MCC and the number of edges in the transitive closure of $G'$ for MEC.



Notice that, when considering an instance of MCC and MEC, we assume that $E$  contains no edge $\{u,v\}$ with $c(u)=c(v)$, otherwise such an edge can be safely deleted from $E$ as $u$ and $v$ will not be part of the same colorful component in any feasible solution of MCC or MEC.

\fullversion{
Since we will consider MCC and MEC restricted to trees, we 
introduce some definitions that will be useful in the rest of the paper. 
Given a tree $G_T=(V,E)$ and a vertex $v \in V$, we denote by 
$G_T(v)$ the subtree of $G_T$ rooted at $v$.
The children of a node $v$ are called \emph{siblings}.
Moreover, we assume that for each internal vertex $v$ of a tree
the children of $v$ are ordered according to same ordering.
}

\paragraph{Parameterized Complexity}

A parameterized problem $(I,k)$ is said \textit{fixed-parameter tractable} (or in the 
class FPT) with respect to a parameter $k$ if it can be solved in $f(k)\cdot|I|^c$ time (in \textit{fpt-time}), where 
$f$ is any computable function and $c$ is a constant (see~\cite{Downey2013} 
for more details about fixed-parameter tractability).
The $O^*$ notation suppresses polynomial factors.
The class XP contains problems solvable in time $|I|^{f(k)}$, where $f$ is an unrestricted function.

A powerful technique to design parameterized algorithms is \textit{kernelization}. In short, 
kernelization is a polynomial-time self-reduction algorithm that takes an instance $(I,k)$ of a parameterized 
problem $P$ as input and computes an equivalent instance $(I',k')$ of $P$ such that 
$|I'| \leqslant h(k)$ for some computable function $h$ and $k' \leqslant k$. The instance $(I',k')$ 
is called a \textit{kernel} in this case. If the function $h$ is polynomial, we say that 
$(I',k')$ is a polynomial kernel. 

\appendixproof{}
{
A \textit{bikernelization} is a polynomial-time algorithm that maps an instance $(I,k)$ of a parameterized problem $P$ to an equivalent instance $(I',k')$ of a parameterized problem $P'$ (the bikernel) such that $|I'| \leq h(k)$ for some computable function $h$ and $k' \leq f(k)$. A kernelization is thus simply a bikernelization from $P$ to itself. Bikernelization was introduced 
in~\cite{DBLP:journals/algorithmica/AlonGKSY11}. 
}

Concerning approximation definitions, we refer the reader to some reference textbook like~\cite{Ausiello}.


\section{MCC for Trees: Parameterized Complexity and Approximability}
\label{sec:MCC:trees}

In this section, we show that MCC on trees is essentially equivalent to the 
\textsc{Minimum Multi-CUT} problem on Trees (M-CUT-T), thus the positive and negative results of (M-CUT-T) for parameterized complexity and approximability transfer to MCC.
We recall here the definition of M-CUT-T.

\Pb{Minimum Multi-CUT (M-CUT-T)}{a tree $T_M$ and a set $S_M$ of pairs of terminals.}{a minimum cut (that is a set of removed edges) such that, for each pair $(x,y) \in S_M$, $x$ and $y$ are disconnected through that cut.}{Output}

\subsection{Positive results}

We start by reducing  MCC to M-CUT-T, thus showing 
that MCC on trees admits an FPT algorithm (and a polynomial kernel) and 
a $2$-approximation algorithm .
We first describe the reduction. Given a colored tree $G_T=(V,E,C)$ as an instance of MCC, we define an instance $(T_M,S_M)$ of M-CUT-T 
as follows: $T_M$ is exactly $G_T$ (except for the colors of the vertices); 
for each pair $(x, y)$ of vertices in $G_T$ such that $c(x)=c(y)$, we define a
pair $(x,y)$ in $S_M$.
\fullversion
{We start by proving the following easy result.
\begin{lemma}
\label{lem:CutProp}
Consider a tree $G_T$ and suppose that $k$ edges of $G_T$ are cut. Then $G_T$ consists of $k+1$ connected components.
\end{lemma}
\begin{proof}
We prove the result by induction on $k$. If $k=0$ the lemma obviously holds.
Assume that, by inductive hypothesis, the lemma holds for at most $k$ edges cut, we prove that it holds for $k+1$ edges cut.
Consider one edge $\{u,v\}$ cut farthest from the root of $G_T$. Then the tree $G_T(v)$ contains no edge cut and one connected component. After the removal of $G_T(v)$ and $\{u,v\}$, the resulting tree $G'_{T}$ contains $k$ edges cut, and by 
inductive hypothesis, $k+1$ connected components. It follows that $G_T$, after $k+1$ edges are cut, contains $k+2$ 
connected components.
\end{proof}
}

Now, we prove the main lemma of this section.

\begin{lemma}
\label{lem:FPTMCC}
Consider an instance $G_T$ of MCC and the corresponding instance $(T_M,S_M)$ of M-CUT-T. Then: (1) given a solution of MCC on $G_T$ consisting of $k+1$ connected components, a solution of  M-CUT-T on $(T_M,S_M)$
consisting of $k$ edges cut can be computed in polynomial time; (2) given a solution of 
M-CUT-T on $(T_M,S_M)$ consisting of $k$ edges, a solution of MCC on $G_T$ consisting of $k+1$ connected components can be computed in polynomial time.
\end{lemma}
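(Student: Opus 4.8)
The plan is to prove both directions by exhibiting that the "edges cut" and "colorful components" viewpoints coincide on $T_M = G_T$. The key observation underlying everything is Lemma~\ref{lem:CutProp}: removing a set of edges from a tree always yields exactly one more connected component than the number of removed edges. So a solution with $k+1$ components corresponds to removing exactly $k$ edges, and conversely. The only remaining content is to check that the \emph{feasibility} conditions of the two problems — "every component is colorful" versus "every terminal pair is separated" — are literally the same condition.

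For direction~(1), I would start from a feasible MCC solution, i.e.\ a set $E' \subseteq E$ such that $G' = (V, E\setminus E', C)$ has all components colorful, and $G'$ has $k+1$ components. By Lemma~\ref{lem:CutProp} we have $|E'| = k$. I claim $E'$, viewed as a cut in $T_M$, is a feasible M-CUT-T solution: take any terminal pair $(x,y) \in S_M$; by construction of $S_M$ this means $c(x) = c(y)$ in $G_T$. If $x$ and $y$ were still connected in $T_M \setminus E'$, then they would lie in the same component of $G'$, which would then contain two vertices of the same color — contradicting colorfulness. Hence every pair in $S_M$ is separated, the cut has size $k$, and it is computed in (trivially) polynomial time — indeed it is the same edge set.

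For direction~(2), I would run the argument in reverse: given a feasible M-CUT-T solution $F$ with $|F| = k$, consider $G'' = (V, E \setminus F, C)$. By Lemma~\ref{lem:CutProp}, $G''$ has exactly $k+1$ components. To see each is colorful, suppose some component of $G''$ contained two vertices $x,y$ with $c(x) = c(y)$; since $x \ne y$ (there is a vertex per color at most once is not assumed, but $c(x)=c(y)$ with $x\neq y$ makes $(x,y)$ a genuine pair), the pair $(x,y)$ belongs to $S_M$, yet $x$ and $y$ are connected in $T_M \setminus F$, contradicting feasibility of $F$. So $G''$ is a feasible MCC solution with $k+1$ components, again obtained in polynomial time as the same edge set. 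One minor point to handle cleanly: in $G_T$ we assumed no edge joins two equally-colored vertices, so every same-color pair $(x,y)$ is a legitimate (non-adjacent-forced) terminal pair, and the correspondence $S_M$ is well defined.

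I do not expect a real obstacle here — the statement is essentially a definitional unfolding once Lemma~\ref{lem:CutProp} is in hand. The only thing to be careful about is bookkeeping the "$k+1$ vs.\ $k$" offset consistently in both directions, and making explicit that the transformation is the identity on edge sets so polynomial-time computability is immediate. If anything is subtle, it is only the remark that pairs with $c(x)=c(y)$ are automatically non-edges of $G_T$, so that the reduction does not create degenerate zero-length terminal pairs; but this follows directly from the standing assumption stated just after the problem definitions.
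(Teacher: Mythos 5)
Your proof is correct and follows essentially the same route as the paper's: both directions rest on the observation that in a tree removing $k$ edges yields exactly $k+1$ components (Lemma~\ref{lem:CutProp}), plus the fact that ``every component is colorful'' and ``every same-colored pair is separated'' are the same condition under the construction of $S_M$. Your version is slightly more explicit about invoking Lemma~\ref{lem:CutProp} and about the identity map on edge sets, but there is no substantive difference from the paper's argument.
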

\begin{proof}
Consider a solution of MCC consisting of $k+1$ components obtained by removing a set $E'$ of $k$ edges. 
Then, $E'$ is a solution of M-CUT-T over instance $(T_M,S_M)$. Indeed, for each pair $(x,y) \in S_M$,
$c(x)=c(y)$, hence the two vertices belong to different connected components after the removal of edges in $E'$.

Conversely, consider a solution $E'$ of M-CUT-T over instance $(T_M,S_M)$, with $|E'|=k$. Then, remove the edges in $E'$ from $G_T$ and
consider the $k+1$ connected components induced by this removal in $G_T$. Since each pair $(x,y) \in S_M$ is disconnected
after the removal of $E'$, it follows that each connected component of $G_T$ after the removal of $E'$ is colorful.
\end{proof}

We can now easily give the main result of this section:
\begin{theorem}\label{th:mccpositive}
MCC when the input graph is a tree, MCC can be solved in time $O^*(1.554^k)$ where $k$ is the natural parameter and also admits a 2-approximation algorithm.
\end{theorem}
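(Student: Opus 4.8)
The plan is to invoke Lemma~\ref{lem:FPTMCC} together with the known algorithmic results for M-CUT-T, essentially reading them off through the reduction. First I would note that, given the colored tree $G_T$, the instance $(T_M,S_M)$ of M-CUT-T described just before Lemma~\ref{lem:FPTMCC} is computable in polynomial time: there are at most $O(|V|^2)$ pairs of equally-colored vertices, and each contributes exactly one terminal pair to $S_M$. Lemma~\ref{lem:FPTMCC} then says that the two instances are equivalent with an additive shift of exactly one in the objective: $G_T$ has an MCC solution with $k+1$ colorful components if and only if $(T_M,S_M)$ has a cut of size $k$. In particular, the minimum number of colorful components of $G_T$ equals the minimum cut size of $(T_M,S_M)$ plus one.

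For the FPT claim I would then run the fastest known parameterized algorithm for M-CUT-T, which decides in $O^*(1.554^{\ell})$ time whether a cut of size at most $\ell$ exists; trying $\ell = 0,1,2,\dots$ in turn (a geometric series dominated by the last term) finds the minimum cut within the same time bound, and part~(2) of Lemma~\ref{lem:FPTMCC} turns that cut into a partition of $G_T$ into colorful components in polynomial time. If the MCC optimum has $k$ components, the minimum cut sought has size $k-1$, so the total running time is $O^*(1.554^{k-1}) = O^*(1.554^{k})$, as claimed.

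For the approximation claim I would apply the classical polynomial-time $2$-approximation for M-CUT-T to $(T_M,S_M)$. Writing $k^{*}$ for the optimal number of colorful components of $G_T$, the optimal cut size of $(T_M,S_M)$ is $k^{*}-1$, so the algorithm returns a cut $E'$ with $|E'| \leq 2(k^{*}-1)$; by Lemma~\ref{lem:FPTMCC}(2), removing $E'$ from $G_T$ yields a feasible MCC solution with $|E'| + 1 \leq 2(k^{*}-1)+1 = 2k^{*}-1 \leq 2k^{*}$ colorful components, i.e.\ a $2$-approximation (in fact slightly better).

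The argument is short once Lemma~\ref{lem:FPTMCC} is available, so I do not expect a real obstacle; the only point needing care is the additive offset of one between ``number of colorful components'' and ``number of cut edges'', and the computation above shows that this offset costs nothing in the exponent of the FPT running time nor in the approximation factor. The one genuinely external ingredient is quoting the correct bounds for M-CUT-T, namely the $O^*(1.554^{\ell})$-time branching algorithm and the LP-rounding $2$-approximation of Garg, Vazirani and Yannakakis.
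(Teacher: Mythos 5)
Your proposal is correct and follows essentially the same route as the paper: reduce to M-CUT-T via the construction preceding Lemma~\ref{lem:FPTMCC}, invoke the $O^*(1.554^\ell)$ algorithm of Kanj et al.\ and the Garg--Vazirani--Yannakakis $2$-approximation, and observe that the additive offset of one between cut edges and colorful components harms neither the exponential base nor the approximation ratio. Your explicit bound $2(k^*-1)+1 \leq 2k^*$ is just the paper's ratio computation $\frac{2\,OPT(I)+1}{OPT(I)+1} \leq 2$ written additively.
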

\appendixproof{Theorem~\ref{th:mccpositive}}
{
\begin{proof}
Since M-CUT-T can be solved in time $O^*(1.554^k)$~\cite{DBLP:journals/tcs/Kanj15}, by the property of our polynomial time
reduction and by Lemma~\ref{lem:FPTMCC}, it follows that MCC can be solved in time $O^*(1.554^k)$ on trees.

Moreover, M-CUT-T admits a factor $2$-approximation algorithm~\cite{DBLP:journals/algorithmica/GargVY97} on trees. 
Denote by $S(I)$ ($OPT(I)$, respectively) an approximation (optimal,
respectively) solution of and instance $I=(G_T)$ of M-CUT-T, 
and by $S(I')$ ($OPT(I')$, respectively) an approximation (optimal, respectively) solution of the corresponding instance $I'=(T_M,S_M)$ of MCC $(T_M,S_M)$. Then, by Lemma~\ref{lem:FPTMCC} and by the 2-approximation algorithm of M-CUT-T, it holds
\[
\frac{S(I')}{OPT(I')} = \frac{S(I)+1}{OPT(I)+1} \leq \frac{2OPT(I)+1}{OPT(I)+1}
\leq  
\]
\[\frac{2OPT(I)+2}{OPT(I)+1} = \frac{2OPT(I')}{OPT(I')}.
\]
Hence we can conclude that MCC admits a 2-approximation algorithm.
\end{proof}
}

Lemma~\ref{lem:FPTMCC} implies also a poly-kernel for MCC on trees.
\begin{theorem}
\label{thm:MCCbikernel}
If the input graph of MCC is a tree, it is possible to compute in polynomial time a kernel of size $O(k^3)$ where $k$ is the natural parameter.
\end{theorem}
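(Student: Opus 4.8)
The plan is to transport a known polynomial kernel for \textsc{Minimum Multi-Cut on Trees} back to MCC through the reduction already established in Lemma~\ref{lem:FPTMCC}. First I would invoke the known result that M-CUT-T admits a kernel of size $O(k^3)$ (for instance via the kernelization of Kanj or the earlier cubic kernel in the literature), where the parameter for M-CUT-T is the number $k$ of edges cut. The subtlety is that a kernel for M-CUT-T produces a reduced instance $(T_M',S_M')$ together with a parameter $k'\le k$, and we must turn this back into a reduced instance of MCC whose parameter is the number of colorful components; by Lemma~\ref{lem:FPTMCC} the natural correspondence is ``number of components $=$ number of cut edges $+1$'', so the MCC parameter shifts by exactly one.

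The key steps, in order, are: (i) given an MCC instance $G_T$ with parameter $k$ (number of colorful components), build the M-CUT-T instance $(T_M,S_M)$ as in the reduction; the relevant cut-parameter is $k-1$. (ii) Run the $O((k-1)^3)=O(k^3)$ kernelization for M-CUT-T to obtain $(T_M',S_M')$ with cut-parameter $\le k-1$ and $|T_M'|+|S_M'| = O(k^3)$. (iii) Translate $(T_M',S_M')$ back into an MCC instance: this is the one genuinely new piece of bookkeeping. We need to realize the terminal pairs of $S_M'$ as color classes on $V(T_M')$. Each pair $(x,y)\in S_M'$ forces $c(x)=c(y)$; vertices not touched by any pair get fresh private colors. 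The only thing to check is \emph{consistency}: the pairs in $S_M'$ must be realizable as an equivalence relation, i.e. if $(x,y)$ and $(y,z)$ are both present (or derivable) we need $(x,z)$ too. If the kernel does not guarantee this, I would add a preprocessing step that takes the transitive closure of the terminal pairs before (or after) kernelizing — note that adding the transitively-implied pair $(x,z)$ does not change which cuts are feasible, since disconnecting $x$ from $y$ and $y$ from $z$ in a tree already disconnects $x$ from $z$ — and this closure only polynomially increases the instance size, so the $O(k^3)$ bound (perhaps with a worse constant or a $k^6$ bound in the worst case, which is still polynomial) is preserved. Then assign one color per equivalence class.

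With the colored tree $G_T'$ in hand, set the MCC parameter to $k'' = (\text{cut-parameter of }(T_M',S_M'))+1 \le k$. By Lemma~\ref{lem:FPTMCC}, $(G_T',k'')$ is a yes-instance of MCC iff $(T_M',S_M')$ is a yes-instance of M-CUT-T with cut-parameter $k''-1$, which is equivalent to the original M-CUT-T instance, which by step~(i) is equivalent to $(G_T,k)$. Hence $(G_T',k'')$ is an equivalent instance, it has size polynomial in $k$, and the whole pipeline runs in polynomial time — so it is a polynomial kernel for MCC on trees. I expect the main obstacle to be exactly the realizability issue in step~(iii): making sure the reduced set of terminal pairs can be encoded by a vertex coloring (equivalence closure) while keeping the size polynomial, and double-checking that taking this closure commutes with the kernelization so that the final bound is genuinely $O(k^3)$ rather than merely polynomial. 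Everything else is a direct composition of a polynomial-time reduction with a polynomial-time kernelization, using the equivalence already proved in Lemma~\ref{lem:FPTMCC}.
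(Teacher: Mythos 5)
Your overall pipeline---compose the MCC-to-M-CUT-T reduction of Lemma~\ref{lem:FPTMCC} with the known $O(k^3)$ kernelization for M-CUT-T, shifting the parameter by one---is exactly the paper's argument; the paper, however, stops at the reduced M-CUT-T instance $(T'_M,h)$ and records the result as a \emph{bikernel} rather than translating back into an MCC instance. The place where your proposal genuinely breaks is precisely that back-translation, step~(iii). Your justification for taking the equivalence (transitive) closure of the terminal pairs---namely that ``disconnecting $x$ from $y$ and $y$ from $z$ in a tree already disconnects $x$ from $z$''---is false. In a star with centre $c$ and leaves $x,y,z$, cutting the single edge $\{c,y\}$ separates $x$ from $y$ and $y$ from $z$, yet $x$ and $z$ remain connected; so adding the implied pair $(x,z)$ strictly shrinks the set of feasible cuts and can increase the optimum. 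Since the M-CUT-T kernelization gives no guarantee that the reduced pair set $S'_M$ is equivalence-closed, you cannot encode it directly as colour classes, and your proposed repair does not restore correctness.

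There are two standard ways to close the gap. The first is to do what the paper does: stop at the reduced M-CUT-T instance and invoke the fact that a polynomial bikernel into an NP-complete problem yields a polynomial kernel for a problem in NP (this is the content of the bikernelization framework the paper cites). The second is to realize an \emph{arbitrary} set of terminal pairs as colours using the leaf gadget that the paper itself employs in its approximation lower bound (Lemmas~\ref{lem:MCCTressHard1} and~\ref{lem:MCCTressHard2}): for each pair $(x,y)\in S'_M$ attach a fresh leaf to $x$ and a fresh leaf to $y$, both coloured with a new colour $c_{x,y}$, and give every other vertex a private colour. This adds only $O(|S'_M|)$ vertices (still polynomial in $k$), preserves the optimum up to the usual $+1$ shift between cut edges and components, and produces a genuine MCC instance of size polynomial in $k$. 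Either route is fine; the equivalence-closure route is not.
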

\appendixproof{Theorem~\ref{thm:MCCbikernel}}
{
\begin{proof}
Consider the described reduction from an instance of $(G_T,k+1)$ of MCC 
to an instance $(T_M,k)$ of  M-CUT-T. 
Since there exists a kernel of size $O(k^3)$ for M-CUT-T~\cite{Chen2012}, 
it follows that, starting from an instance $(T_M,k)$ of M-CUT-T we can compute an instance $(T'_M,h)$ of M-CUT-T, with $h \leq k$, such that M-CUT-T on $T_M$ has a solution of size $k$ if and only if M-CUT-T on $T'_M$ has a solution of size $h$, and the number of vertices of $T'_M$ is bounded by $O(k^3)$.
Combining the two reductions, we have described a \emph{bikernel}, due to the following properties:
 
\begin{itemize}
\item An instance of $(G_T,k+1)$ of MCC admits a solution if and only if the corresponding instance $(T'_G,h)$
of M-CUT-T admits a solution of size $h$, where $(T'_G,h)$ can be computed in polynomial time
\item $h \leq k+1$
\item the number of vertices of $T'_G$ is bounded by $O(k^3)$
\end{itemize}
\end{proof}
}

\subsection{Lower bounds of MCC on trees}

Now we give a reduction from M-CUT-T to MCC on trees, thus proving a lower bound for the approximation of MCC on trees.
Starting from any instance $(T_M,S_M)$ of M-CUT-T, we compute a colored tree $G_T=(V,E,C)$, input of MCC, as follows. 
First, $G_T$ is isomorphic to $T_M$, and we color each vertex $v$ of $G_T$ with $c_v$. 
Denote by $E_1$ the edge set of such a tree.
Then, for each pair $(u,v) \in S_M$, we define a leaf $u_v$ adjacent to $v$ and colored $c_{u,v}$ and a leaf $v_u$ adjacent to $u$ and colored $c_{u,v}$ (see \autoref{fig:mcctrees}).
Denote by $E_2$ the edge set introduced by adding these edges.
More formally:

\begin{itemize}
\item $V = \{ V(T_M) \} \cup \{ u_v, v_u | (u,v) \in S_M \}$
\item $E_1 = E(T_M) $
\item $E_2 = \{ \{u_v,v\}, \{v_u,u\} | (u,v) \in S_M \} $
\item $E = E_1 \cup E_2$
\item $c(v) = c_v, \forall v \in V(T_M)$
\item $c(u_v) = c(v_u) = c_{u,v}, \forall (u,v) \in S_M$
\end{itemize}

\begin{figure}
\centering
\begin{tikzpicture}[scale=.9,transform shape,>=stealth,shorten <=.5pt,shorten >=.5pt]
\tikzstyle{vertex}=[circle,fill=black!0,minimum size=18pt,inner sep=0pt]

\node[vertex,draw,label=10:1] (m1) at (0,0) {};
\node[vertex,draw,label=10:2] (m2) at (-1,-1) {};
\node[vertex,draw,label={10:3}] (m3) at (0,-1) {};
\node[vertex,draw,label=10:4] (m4) at (1,-1) {};
\node[vertex,draw,label=10:5] (m5) at (-2,-2) {};
\node[vertex,draw,label=10:6] (m6) at (-1,-2) {};
\node[vertex,draw,label=10:7] (m7) at (0,-2) {};
\node[vertex,draw,label=10:8] (m8) at (1,-2) {};
\node[vertex,draw,label=10:9] (m9) at (1,-3) {};
\draw (m1) -- (m2);
\draw (m1) -- (m3);
\draw (m1) -- (m4);
\draw (m2) -- (m5);
\draw (m2) -- (m6);
\draw (m3) -- (m7);
\draw (m4) -- (m8);
\draw (m8) -- (m9);

\node () at (0,-4) {$T_M$};

\begin{scope}[xshift=6cm]
\node[vertex,draw] (m1) at (0,0) {$c_1$};
\node[vertex,draw] (m2) at (-1,-1) {$c_2$};
\node[vertex,draw] (m3) at (0,-1) {$c_3$};
\node[vertex,draw] (m4) at (1,-1) {$c_4$};
\node[vertex,draw] (m5) at (-2,-2) {$c_5$};
\node[vertex,draw] (m6) at (-1,-2) {$c_6$};
\node[vertex,draw] (m7) at (0,-2) {$c_7$};
\node[vertex,draw] (m8) at (1,-2) {$c_8$};
\node[vertex,draw] (m9) at (1,-3) {$c_9$};
\draw (m1) -- (m2);
\draw (m1) -- (m3);
\draw (m1) -- (m4);
\draw (m2) -- (m5);
\draw (m2) -- (m6);
\draw (m3) -- (m7);
\draw (m4) -- (m8);
\draw (m8) -- (m9);
\node[vertex,draw] (g57) at (-2.35,-3) {$c_{5,6}$};
\node[vertex,draw] (g58) at (-1.65,-3) {$c_{4,5}$};
\node[vertex,draw] (g6) at (-1,-3) {$c_{5,6}$};
\node[vertex,draw] (g2) at (-2,-1) {$c_{2,8}$};
\node[vertex,draw] (g4) at (2,-1) {$c_{4,5}$};
\node[vertex,draw] (g8) at (2,-2) {$c_{2,8}$};
\draw (m5) edge[ultra thick] (g58);
\draw (m5) edge[ultra thick] (g57);
\draw (m6) edge[ultra thick] (g6);
\draw (m2) edge[ultra thick] (g2);
\draw (m4) edge[ultra thick] (g4);
\draw (m8) edge[ultra thick] (g8);
\node () at (0,-4) {$G_T$};
\end{scope}
\end{tikzpicture}
\caption{Sample construction of $G_T$ from $T_M$ with $S_M = \{(2,8), (5,6), (4,5)\}$. Edge set $E_2$ of $T$ is drawn thick. For ease, colors of $G_T$ are drawn inside the nodes. On possible solution for this instance of M-CUT-T cuts edges $\{\{2,6\}, \{1,4\}\}$ and implies $3$ colorful connected components in the corresponding instance of MCC.}\label{fig:mcctrees}
\end{figure}

We start by proving a property of the tree $G_T=(V,E,C)$, input of MCC.

\begin{lemma}
\label{lem:MCCTressHard1}
Given a solution $G'=(V, E \setminus E')$ of MCC on $G_T=(V,E,C)$ consisting of $k$ colorful components, we can compute in polynomial time a solution $G''=(V, E \setminus E'')$ of MCC on $G_T=(V,E,C)$ consisting of at most $k$ colorful components such that $E'' \subseteq E_1$.
\end{lemma}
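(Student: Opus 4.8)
The plan is to prove the statement by a local rerouting (exchange) argument on the cut set: starting from a solution of MCC on $G_T$ with $k$ colorful components, I repeatedly replace one cut edge lying in $E_2$ by at most one cut edge lying in $E_1$ without increasing the number of components, until no edge of $E_2$ is cut. Since each such step strictly decreases $|E'\cap E_2|$ and $|E_2|$ is polynomial in the input size, this terminates in polynomial time with a solution whose cut edges all lie in $E_1$ and which still has at most $k$ colorful components.

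Concretely, fix a solution obtained by removing a set $E'\subseteq E$, and suppose some $e\in E'\cap E_2$. By the construction of $G_T$, every edge of $E_2$ joins a vertex of the $T_M$-copy to a pendant leaf, so we may write $e=\{v,u_v\}$ for some $(u,v)\in S_M$, where $u_v$ is the leaf of color $c_{u,v}$ attached to $v$; recall that the only other vertex of $G_T$ of color $c_{u,v}$ is the leaf $v_u$ attached to $u$ (and $u\neq v$, since terminal pairs are pairs of distinct vertices). Since $e$ is the unique edge incident to $u_v$, after removing $E'$ the vertex $u_v$ is an isolated component; let $X$ be the component containing $v$. If $v_u\notin X$, I simply delete $e$ from the cut set: then $u_v$ joins $X$, the merged component $X\cup\{u_v\}$ is colorful (its only vertex of color $c_{u,v}$ is $u_v$), every other component is untouched, and the number of components drops by one, so the new solution is feasible, has fewer than $k$ components, and one fewer cut edge in $E_2$. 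If instead $v_u\in X$, then since $v_u$'s only neighbor is $u$ we get $u\in X$ and $\{u,v_u\}\notin E'$; because $X$ is a connected subgraph of the tree $G_T$ and because the $T_M$-copy is a pendant-free subtree, the unique $G_T$-path between $u$ and $v$ lies inside $X$ and consists only of edges of $E_1$, none of which is in $E'$ (otherwise $u$ and $v$ would not both be in $X$). Picking one such edge $e'$ and replacing $E'$ by $(E'\setminus\{e\})\cup\{e'\}$ splits $X$ into $X_u\ni u$ (still containing $v_u$) and $X_v\ni v$, while restoring $e$ attaches $u_v$ to $X_v$; both pieces are colorful, since $X_u,X_v$ inherit colorfulness from $X$ and $X_v\cup\{u_v\}$ is colorful because its only $c_{u,v}$-colored vertex is $u_v$ (its twin $v_u$ now sits in $X_u$). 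The component count is unchanged (one split compensated by one merge), and one more edge of $E_2$ has been removed from the cut set.

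The remaining verifications are routine: (i) feasibility is preserved throughout, because the only monochromatic vertex pairs in $G_T$ are the twin leaves $\{u_v,v_u\}$, so splitting a component can never create a color conflict, the single merge performed in each step is checked directly above, and all twin pairs other than the one being handled remain separated since we only split components and re-attach one already-isolated leaf; (ii) the component-count bookkeeping in each of the two cases is immediate; and (iii) a suitable edge $e'$ always exists and is not already cut, which follows from $X$ being connected and the $T_M$-copy being a subtree of $G_T$ carrying no pendant vertices of $E_2$. The only genuine obstacle is the second case: one cannot simply uncut $e$ there, as that would place the monochromatic twins $u_v$ and $v_u$ in the same component; the fix is exactly to pay for the restored $E_2$-edge with a single cut on the $u$--$v$ path inside $E_1$, and the crux of the argument is checking that this trade is component-count-neutral and does not damage feasibility for any other twin pair.
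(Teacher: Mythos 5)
Your proposal is correct and follows essentially the same route as the paper's proof: for each cut $E_2$-edge, the attached leaf is isolated, and you either re-attach it directly (when its twin is absent from the component) or trade the cut for a single $E_1$-edge on the $u$--$v$ path (when the twin is present), which is exactly the paper's two-case exchange. Your write-up is merely more explicit about the iteration, the component-count neutrality of the trade, and the preservation of feasibility for the other twin pairs.
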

\begin{proof}
Consider the case that a (deleted) edge $\{ u,v \} \in E'$, where $v$ is a leaf introduced in $G_T$. 
Then, notice that the removal of edge $\{ u,v \}$ makes $v$ an isolated vertex. 
By construction $u$ and $v$ (and each leaf adjacent to $u$) have different colors. 
Hence there are two possible cases: either the colorful component $H$ that contains $u$ does not include vertices colored by $c_v$, hence we can add $v$ to $H$, thus we can avoid removing edge $\{ u,v \}$, or there is a vertex $w$ colored by $c_v$ in $H$.
In this case we can remove an edge of $E_1$, which separates $w$ from $u$ without removing edge $\{ u,v \}$; such an edge must exist, since $v$ and $w$ are leaves incident in different internal vertices. 
\end{proof}

Now, we prove that the reduction from M-CUT-T to MCC holds.

\begin{lemma}
\label{lem:MCCTressHard2}
Consider an instance $(T_M,S_M)$ of M-CUT-T and the corresponding instance $G_T=(V,E,C)$ of  MCC. Then: 
(1) given a solution of M-CUT-T over instance 
$(T_M,S_M)$ that cuts $k$ edges, we can compute in polynomial time a solution of 
MCC over instance $G_T=(V,E,C)$ consisting of at most $k+1$ colorful components;
(2) given a solution of MCC over instance $G_T=(V,E,C)$ consisting of at most $k+1$ colorful components,
we can compute in polynomial time a solution of M-CUT-T over instance 
$(T_M,S_M)$ that cuts at most $k$ edges.
\end{lemma}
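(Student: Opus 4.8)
The plan is to prove both directions by a direct translation of cuts, using Lemma~\ref{lem:CutProp} to count connected components of a tree after edge removals, together with Lemma~\ref{lem:MCCTressHard1} to restrict attention to cuts inside $E_1$, and the structure of the $E_2$-gadget leaves to control colorfulness. The construction is tailored so that the only repeated colors in $G_T$ occur within the leaf gadgets: every ``original'' vertex $v$ carries a private color $c_v$, while for each pair $(u,v)\in S_M$ the color $c_{u,v}$ appears exactly on the two leaves $u_v$ (adjacent to $v$) and $v_u$ (adjacent to $u$). Hence the only way a component of $G_T$ can fail to be colorful is to contain both $u_v$ and $v_u$ for some pair, which (since the $E_2$-edges joining them to $v$ and $u$ are never cut, by Lemma~\ref{lem:MCCTressHard1}) happens precisely when $u$ and $v$ lie in the same component.

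For part (1), I would take a solution $E'$ of M-CUT-T on $(T_M,S_M)$ with $|E'|=k$ and cut exactly the corresponding edges in $G_T$; these all lie in $E_1$. Since $G_T$ is a tree, Lemma~\ref{lem:CutProp} gives $k+1$ connected components, so it only remains to check each is colorful. By the remark above, a color clash would force some pair $(u,v)\in S_M$ to have $u$ and $v$ in a common component, contradicting that $E'$ is an M-CUT-T solution. Thus we obtain a feasible MCC solution with exactly $k+1 \le k+1$ colorful components.

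For part (2), I would start from an MCC solution on $G_T$ with at most $k+1$ colorful components and first apply Lemma~\ref{lem:MCCTressHard1} to get an equivalent solution, still with at most $k+1$ colorful components, whose cut edges form a set $F \subseteq E_1$. Since $G_T$ is a tree, Lemma~\ref{lem:CutProp} gives exactly $|F|+1$ components, so $|F|\le k$. Finally I would argue that $F$ disconnects every pair of $S_M$: if some $(u,v)\in S_M$ had $u$ and $v$ in the same component $H$ after removing $F$, then the uncut $E_2$-edges would put both $u_v$ and $v_u$ into $H$, giving two vertices colored $c_{u,v}$ in $H$, a contradiction with $H$ colorful. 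Hence $F$, read back as a set of edges of $T_M$, is a feasible M-CUT-T solution cutting at most $k$ edges.

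I expect the only delicate point to be the bookkeeping around the ``$k$ versus $k+1$'' shift and the invocation of Lemma~\ref{lem:MCCTressHard1}: one must be careful that moving the cut edges into $E_1$ does not increase the number of components, so that the edge count stays $\le k$, and that no color collision other than the intended leaf-pair ones can arise. Beyond that, the argument is a routine application of Lemma~\ref{lem:CutProp} in each direction.
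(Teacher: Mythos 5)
Your proposal is correct and follows essentially the same route as the paper: translate cuts between $E_1$ and the edges of $T_M$, use Lemma~\ref{lem:CutProp} to convert between $k$ cut edges and $k+1$ components, invoke Lemma~\ref{lem:MCCTressHard1} in direction (2) to restrict the cut to $E_1$, and observe that a color clash can only come from the two leaves $u_v$, $v_u$ sharing color $c_{u,v}$, which happens exactly when $u$ and $v$ stay connected. Your explicit derivation of $|F|\leq k$ from ``$|F|+1$ components $\leq k+1$'' is in fact slightly cleaner than the paper's closing sentence for part (2), but the argument is the same.
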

\appendixproof{Lemma~\ref{lem:MCCTressHard2}}
{
\begin{proof}
(1). Consider a solution of M-CUT-T obtained by removing $k$ edges.
We compute a solution of MCC over instance $G_T=(V,E,C)$ by removing the corresponding edges of $E_1$. 
Now, since each pair $(u,v)$ in $S_M$ is disconnected in M-CUT-T, there exists a removed edge of $E_1$ on the unique path connecting two leaves adjacent to the vertices $u$ and $v$ and both colored by $c_{u,v}$. 
It follows that $u$ and $v$ belong to different connected components and that each connected component is colorful.
But then the solution of MCC consists of $k+1$ connected components.

(2). Consider a  solution of MCC over instance $G_T=(V,E,C)$ consisting of $k+1$ colorful
connected components. 
By Lemma~\ref{lem:MCCTressHard1}, it follows that the solution removes an edge set $E'_1 \subseteq E_1$. 
Now, consider the solution of M-CUT-T obtained by removing the edge set $E'_M$ corresponding to $E'_1$. 
It follows that each pair $(u,v)$ in $S_M$ is disconnected by removing $E'_M$, since the removal of edge set $E'_1$ from $G_T$ gives a graph consisting only of connected colorful components. 
Hence each pair of leaves having both color $c_{u,v}$ is disconnected.
Since the solution of  MCC over instance $G_T=(V,E,C)$ consists of $k+1$ colorful connected components and removes $k$ edges, it follows that the solution of M-CUT-T consists of $k$ removed edges.
\end{proof}
}

It was shown that M-CUT-T is as hard as \textsc{Minimum Vertex Cover} to approximate~\cite{DBLP:journals/algorithmica/GargVY97}, therefore, M-CUT-T cannot be approximated within factor 1.36 unless P=NP~\cite{Dinur2004} and within factor 2 assuming the Unique Game Conjecture (UGC)~\cite{Khot2008}.
Moreover, in the reduction given in~\cite{DBLP:journals/algorithmica/GargVY97}, the parameter is exactly the same, so  M-CUT-T cannot be solved in $2^{o(k)}n^{O(1)}$, assuming the ETH~\cite{LokshtanovMS11}.
Therefore, Lemma~\ref{lem:MCCTressHard1} and Lemma~\ref{lem:MCCTressHard2} allow to extend these results to MCC.

\begin{theorem}
\label{teo:APX-hardnessMCCTrees}
MCC on trees: (1) cannot be approximated within factor $1.36 -\varepsilon$, for any constant $\varepsilon>0$, unless P=NP, (2) cannot be approximated within factor $2 -\varepsilon$, for any constant $\varepsilon>0$, assuming the UGC and (3) cannot be solved in $2^{o(k)}n^{O(1)}$, assuming the ETH. 
\end{theorem}
\appendixproof{Theorem~\ref{teo:APX-hardnessMCCTrees}}
{
\begin{proof}
For (1), denote by $A(I)$ ($OPT(I)$, respectively) the value of an approximated (optimal, respectively) 
solution of M-CUT-T on instance $I=(T_M,S_M)$.
Denote by $A(I')$ ($OPT(I')$, respectively) the value of an approximated (optimal, respectively) 
solution of MCC on the corresponding instance $I'=(G_T)$. 
Then,
\[
\frac{A(I')}{OPT(I')} = \frac{A(I)+1}{OPT(I)+1} = 
\]
\[
= \frac{A(I)+1.36}{OPT(I)+1} -\frac{0.36}{OPT(I)+1}
\]

M-CUT-T cannot be approximated in factor 1.36 
(since it is hard to approximate as \textsc{Minimum Vertex Cover}~\cite{DBLP:journals/algorithmica/GargVY97}),
hence it holds, $A(I) \geq 1.36 OPT(I)$, which implies that

\[
A(I)+1.36 \geq 1.36(OPT(I)+1)
\]

It follows that
\[
\frac{A(I')}{OPT(I')} \geq 1.36 
-\frac{0.36}{OPT(I)+1}
\]
Defining $\varepsilon = \frac{0.36}{OPT(I)+1}$, the lemma holds, since if $\varepsilon$ is a constant,
then the same holds for $OPT(I)$ and $OPT(I')$, hence MCC can be trivially solved in constant time.

For (2), note that since M-CUT-T cannot is not approximable within
factor $2$ under the Unique Game Conjecture (UGC)~\cite{Khot2008}, 
thus the inequalities given above can modified by substituting $1.36$ with $2$,
showing that MCC on trees cannot be approximated within factor $2 -\varepsilon$, 
for any constant $\varepsilon>0$, assuming the UGC. 

For (3), observe that in our reduction the parameter increases only linearly and therefore preserves subexponential-time solvability.
\end{proof}
}

\section{Structural parameterization of MCC}
\label{sec:MCCDisjPath}


Since the MCC problem is already NP-hard on trees, we consider in this section the complexity of the problem when the input graph is a path or is close to a set of disjoint paths. 
We show that MCC can be easily solved in polynomial time when the input graph is a path (hence even when the input graph is a set of disjoint path), while, as a sharp contrast, MCC is not in the class XP for parameter distance to disjoint paths (more precisely, it is NP-hard even when the input graph is at distance 1 to the class of disjoint paths).

%


We start by showing that MCC on paths can be solved in polynomial time.

\begin{theorem}\label{thm:MCCpathspoly}
MCC on paths can be solved in $O(n^2)$-time.
\end{theorem}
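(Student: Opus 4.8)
The plan is to reduce MCC on a path to the elementary task of cutting a colored sequence into as few consecutive \emph{colorful blocks} as possible, and to solve the latter by a straightforward dynamic program.

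First I would record two trivial facts. Let $G=(V,E,C)$ be a path $v_1 v_2 \cdots v_n$. (i) Deleting a set $E'\subseteq E$ of edges leaves exactly $|E'|+1$ connected components, each of which is a maximal run of consecutive vertices $\{v_a,v_{a+1},\dots,v_b\}$; consequently minimizing the number of colorful components is the same as minimizing the number of deleted edges, i.e.\ choosing cut positions $0=i_0<i_1<\dots<i_t=n$ with $t$ minimum so that each block $B_r=\{v_{i_{r-1}+1},\dots,v_{i_r}\}$ is colorful. (ii) Colorfulness is hereditary: every set of consecutive vertices contained in a colorful block is itself colorful.

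Then I would set up the dynamic program. For $0\leq i\leq n$ let $f(i)$ be the minimum number of colorful blocks covering the prefix $v_1,\dots,v_i$, with $f(0)=0$. One has
\[
f(i)=1+\min\{\, f(j)\ :\ 0\leq j<i \text{ and } \{v_{j+1},\dots,v_i\}\text{ is colorful}\,\},
\]
and the optimum of MCC equals $f(n)$. Correctness is immediate in both directions: in an optimal partition of $v_1,\dots,v_i$ the last block is some colorful $\{v_{j+1},\dots,v_i\}$ and the preceding blocks form a partition of $v_1,\dots,v_j$, optimal there by a standard cut-and-paste argument; conversely each step of the recursion appends a legal block. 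Backtracking through the minimizing choices yields the set of edges to delete. For the running time, there are $n+1$ values of $i$, at most $n$ candidate values of $j$ for each, and testing whether $\{v_{j+1},\dots,v_i\}$ is colorful costs $O(n)$ (one scan of the block), for a total of $O(n^3)$.

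The only point requiring any thought -- and it is minor -- is fact (i): that on a path the objective ``number of components'' is equivalent to ``number of deleted edges,'' which lets us pass to the block-partition formulation, together with (ii), which makes the recursion correct. Everything else is routine; in fact, using (ii) one can precompute for every $i$ the leftmost $a$ with $\{v_a,\dots,v_i\}$ colorful via a single two-pointer sweep, and since $f$ is non-decreasing this drops the cost to $O(n)$ -- equivalently, the greedy rule ``extend the current block while it stays colorful'' is optimal by a one-line exchange argument -- but the crude $O(n^3)$ bound already establishes the theorem.
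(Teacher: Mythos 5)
Your proof is correct and follows essentially the same route as the paper: the identical dynamic program over prefix positions, with the last colorful block $\{v_{t+1},\dots,v_j\}$ enumerated and a cut-and-paste/induction argument for correctness, yielding the same $O(n^3)$ bound. The additional observations (equivalence of minimizing components and deleted edges on a path, and the possible greedy/two-pointer speedup) are correct but not needed beyond what the paper already does.
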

\begin{proof} 
Assume that the input graph is a path $G_P=(V,E,C)$, and assume that the vertices on the path are ordered from $v_1$ to $v_n$. 
Define a function $M[j]$, with $0 \leq j \leq n$, as the minimum number of colorful components of a solution of MCC over instance $G_P$ restricted
to vertices $\{ v_1,\dots,v_j \}$.
$M[j]$, with $1 < j \leq n$, can be computed as follows:
\[ M[j] = \min_{0\leq t < j}
M[t] + 1, \text{ such that $v_{t+1}, \dots, v_j$ induce a colorful component.}
\]

In the base cases, that is when $j=0$ OR $j=1$, 
it holds $M[1]=1$, and $M[0]=0$. Next, we prove the correctness of the dynamic programming recurrence.

Given a path $G_P=(V,E,C)$ instance of MCC, 
there exists a solution of MCC on instance $G_P$ 
restricted to vertices $\{ v_1,\dots,v_j \}$ 
consisting of $h$ colorful components if and only if $M[j]=h$.
The base cases obviously holds, since $M[1]=1$ if and only if 
$v_1$ induces a colorful connected components and
$M[0]=0$ by definition.

We prove the lemma by induction on $j$.
Consider the case that $M[j]=h$, with $1 < j \leq n$
and $h \geq 1$.
Assume that $M[j]= M[t]+1$, for some $ 0 < t \leq j$. 
By induction hypothesis, assume that $t \geq 1$,
there exists a solution of MCC on instance $G_P$ restricted
to vertices $\{ v_1,\dots,v_t \}$  consisting of $h-1$ colorful components, thus there exists 
a solution of MCC on instance $G_P$ restricted
to vertices $\{ v_1,\dots,v_j \}$ consisting of $h$ colorful components.
If $t=0$, it holds $M[t] =0$, then $M[j]=h=1$.

Assume that there exists a solution of MCC on instance $G_P$ restricted
to vertices $\{ v_1,\dots,v_j \}$ consisting of $h$ connected components,
where $h \geq 0$.
Consider the colorful component that includes $v_j$, and assume 
that it is induced by
$v_{t+1}, \dots, v_j$, with $0 \leq t < j$. By induction hypothesis, 
it follows that $M[t]=h-1$, and that the connected component induced by $v_{t+1}, \dots, v_j$
is colorful, thus $M[j]=h$, concluding the proof.

It is then easy to see that the value of an optimal solution of MCC on path $G_P=(V,E,C)$ is stored in $M[n]$. 
The table $M[j]$ consists of $n$ entries and each entry can be computed in time $O(n)$, since we have to check at most $n$ value $t<j$, 
and the fact that the path $v_{t+1}, \dots, v_j$ is colorful can be precomputed in $O(n^2)$ time and then checked in constant time,
it follows that MCC on paths can be computed in time $O(n^2)$.
\end{proof}

Notice that if an input graph of MCC consists of disjoint paths, 
it can be solved in polynomial-time by applying the dynamic
programming algorithm independently to each path.

Now we prove that MCC is not in XP when parameterized by the
\emph{Distance to Disjoint Paths} number $d$ (the minimum number of 
vertices to remove from the input graph to have disjoint paths), even when 
the input graph is a tree.
We prove this result by giving a reduction 
from \textsc{Minimum Vertex Cover} (MinVC) to MCC on trees.

Consider an instance $G=(V,E)$ of MinVC, and let $G_C=(V_C,E_C)$ be the corresponding instance of MCC. $G_C$ is a rooted tree,  
defined as follows.
First, we define $|V|$ paths, one for each vertex in $G$. Path $P_i$ contains vertex $v_{c,i}$, colored by $c_i$, 
and vertices $e_{c,i,j}$, for each
$\{v_i,v_j\} \in E$, colored by $c_{ij}$. Notice that vertices $e_{c,i,j}$ appears in $P_i$ based on the lexicographic order
of the corresponding edges. 
Moreover, there exist two vertices associated with edge $\{v_i,v_j\} \in E$, namely $e_{c,i,j}$ (in $P_i$) and $e_{c,j,i}$ (in $P_j$), 
which are both colored by $c_{ij}$.
The tree $G_C$ is obtained by connecting the paths $P_1, \dots P_{|V|}$ to a root $r$, which is colored by $c_r$, where $c_r$ is a color
not associated with other vertices of $G_C$ (see \autoref{fig:vcmcc}).

\begin{figure}
\centering
\begin{tikzpicture}[scale=.6,transform shape]
\tikzstyle{vertex}=[circle,fill=black!0,minimum size=18pt,inner sep=0pt]
	\node[] () at (-11,7.5) {$G=(V,E)$};
    \node[vertex,draw,very thick,label=10:1] (IS1) at (-9,9) {};
    \node[vertex,draw,very thick,label=10:2] (IS2) at (-7,9) {};
	\node[vertex,draw,label=10:3] (IS3) at (-8,10) {};
    \node[vertex,draw,label=10:4] (IS4) at (-9,7.5) {};
    \node[vertex,draw,very thick,label=10:5] (IS5) at (-7,7.5) {};
     \path[draw] (IS1) -- (IS2) -- (IS3) -- (IS1);
     \path[draw] (IS2) -- (IS5) -- (IS4) -- (IS1);
	\begin{scope}[xshift=-9	cm,yshift=-4cm]
	     \node[] () at (5,11.0) {$G_C=(V_C,E_C)$};
	     \node[vertex,draw] (Tr) at (10,15) {$c_r$};
	     \node[vertex,draw] (c1) at (7,14) { $c_{1}$};
  	     \node[vertex,draw] (T11) at (7,13) { $c_{1,2}$};
	     \node[vertex,draw] (T12) at (7,12) { $c_{1,3}$};
	     \node[vertex,draw] (T13) at (7,11) { $c_{1,4}$};
	     \draw (Tr) edge[ultra thick] (c1);
	     \path[draw] (c1) -- (T11) -- (T12) -- (T13);
    	     \node[vertex,draw] (c2) at (8.5,14) { $c_{2}$};
  	     \node[vertex,draw] (T21) at (8.5,13) { $c_{1,2}$};
	     \node[vertex,draw] (T22) at (8.5,12) { $c_{2,3}$};
	     \node[vertex,draw] (T23) at (8.5,11) { $c_{2,5}$};
	     \path[draw] (Tr) edge[ultra thick] (c2) (c2) --  (T21) -- (T22) -- (T23);
	     \node[vertex,draw] (c3) at (10,14) { $c_{3}$};
  	     \node[vertex,draw] (T31) at (10,13) { $c_{1,3}$};
	     \node[vertex,draw] (T32) at (10,12) { $c_{2,3}$};
	     \path[draw] (Tr) -- (c3) -- (T31) -- (T32);
	     \node[vertex,draw] (c4) at (11.5,14) { $c_{4}$};
  	     \node[vertex,draw] (T41) at (11.5,13) { $c_{1,4}$};
	     \node[vertex,draw] (T42) at (11.5,12) { $c_{4,5}$};
	     \path[draw] (Tr) -- (c4) -- (T41) -- (T42);
		\node[vertex,draw] (c5) at (13,14) { $c_{5}$};
  	     \node[vertex,draw] (T51) at (13,13) { $c_{2,5}$};
	     \node[vertex,draw] (T52) at (13,12) { $c_{4,5}$};
	     \path[draw] (Tr) edge[ultra thick] (c5) (c5) -- (T51) -- (T52);
\end{scope}
\end{tikzpicture}
\caption{Sample construction of an instance of MCC from an instance of MinVC. A possible solution for MinVC is given in thick while edges to be cut for the instance of MCC are also in thick.}\label{fig:vcmcc}
\end{figure}

\begin{lemma}
\label{lem:MCCDisjPath1}
Let $G=(V,E)$ be an instance of MinVC, and let $G_C=(V_C,E_C)$ be the corresponding instance of MCC.
Then: (1) given a vertex cover of $G$ of size $k$, we can compute in polynomial time a solution of MCC over instance $G_C$
consisting of $k+1$ colorful components; (2) given a solution of MCC over instance $G_C$ consisting of $k+1$ colorful components,
we can compute in polynomial time a vertex cover of $G$ of size $k$.
\end{lemma}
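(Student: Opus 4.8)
The plan is to establish the two directions separately, with the key insight being that in the constructed tree $G_C$, the only way to ``separate'' the two occurrences of a color $c_{ij}$ (namely $e_{c,i,j}$ on path $P_i$ and $e_{c,j,i}$ on path $P_j$) is to cut at least one of the two edges joining the root $r$ to the top of $P_i$ or to the top of $P_j$; all other edges lie strictly below these occurrences on their respective paths and cutting them cannot disconnect $e_{c,i,j}$ from $e_{c,j,i}$. Conversely, within a single path $P_i$ there is never a color conflict by construction, and the root $r$ has its own fresh color $c_r$, so the only color conflicts in $G_C$ are exactly the pairs $(e_{c,i,j}, e_{c,j,i})$ ranging over edges $\{v_i,v_j\}\in E$. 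First I would record this observation formally, and also note (along the lines of Lemma~\ref{lem:CutProp}) that cutting a set of edges in a tree with $k$ elements yields exactly $k+1$ connected components, so ``$k+1$ colorful components'' is interchangeable with ``$k$ edges cut''.

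For direction (1): given a vertex cover $U\subseteq V$ of size $k$, cut, for each $v_i\in U$, the edge $\{r, v_{c,i}\}$ joining $r$ to the top of $P_i$. This cuts exactly $k$ edges, hence produces $k+1$ components. To check colorfulness, take any color conflict, i.e.\ any edge $\{v_i,v_j\}\in E$; since $U$ is a vertex cover, $v_i\in U$ or $v_j\in U$, so at least one of $P_i$, $P_j$ has been detached from the root and thus $e_{c,i,j}$ and $e_{c,j,i}$ now lie in different components. Since these are the only possible conflicts, every component is colorful. This can all be done in polynomial time.

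For direction (2): given a solution of MCC on $G_C$ with $k+1$ colorful components, it cuts $k$ edges. The plan is to massage this solution, without increasing the number of cut edges, into one that only cuts edges of the form $\{r, v_{c,i}\}$ — an argument analogous in spirit to Lemma~\ref{lem:MCCTressHard1}. Concretely, if the solution cuts an edge $e$ strictly inside some path $P_i$ (below $v_{c,i}$), I would argue we can ``uncut'' $e$ and instead cut $\{r,v_{c,i}\}$ if it is not already cut; this does not increase the number of cuts and does not create any new conflict, since detaching all of $P_i$ from $r$ only further separates things. (If $\{r,v_{c,i}\}$ is already cut, we simply uncut $e$; reconnecting an internal edge of $P_i$ cannot merge two conflicting occurrences because, as noted, $P_i$ is internally conflict-free and is already severed from the rest of the tree.) Iterating, we obtain a solution cutting only root-edges $\{r,v_{c,i}\}$, say for $v_i$ ranging over a set $U\subseteq V$ with $|U|\le k$. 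For each edge $\{v_i,v_j\}\in E$, colorfulness forces $e_{c,i,j}$ and $e_{c,j,i}$ to be separated, which is only possible if $v_i\in U$ or $v_j\in U$; hence $U$ is a vertex cover of $G$ of size at most $k$. If $|U|<k$ we may pad it arbitrarily to size exactly $k$ (or simply state ``of size at most $k$'', which suffices for the reduction).

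The main obstacle is the normalization step in direction (2): one must be careful that re-routing an internal cut to a root-edge never reconnects two vertices of the same color. This is handled by the structural observation that each $P_i$ is internally rainbow-colored and that the only cross-path conflicts involve the $c_{ij}$-labelled vertices, which sit at the very top of their paths; cutting $\{r,v_{c,i}\}$ dominates (in terms of separation achieved) every internal cut of $P_i$, so the exchange is always safe. Once this is in place, both directions and the polynomial-time bounds are immediate, and combined with the NP-hardness of MinVC this yields that MCC on trees is NP-hard even when the distance to disjoint paths is small — indeed $G_C$ becomes a disjoint union of paths after deleting the single vertex $r$.
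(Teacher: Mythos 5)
Your proof is correct and follows essentially the same route as the paper: in one direction cut the root edges $\{r,v_{c,i}\}$ for the cover vertices, and in the other normalize an arbitrary solution so that any cut inside $P'_i$ is replaced by cutting $\{r,v_{c,i}\}$, then read off the cover. (One cosmetic slip: the $c_{ij}$-colored vertices do not sit at the very top of $P_i$ --- the $c_i$-colored vertex $v_{c,i}$ does --- but this is immaterial, since cutting $\{r,v_{c,i}\}$ detaches all of $P_i$, which is internally colorful.)
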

\appendixproof{Lemma~\ref{lem:MCCDisjPath1}}
{
\begin{proof}
(1) Let $V' \subseteq V$ be a vertex cover of $G$, with $|V'|=k$. 
Define a solution $G'_C$ of MCC over instance $G_C$ as follows.
For each vertex $v_i \in V'$, remove the edge $\{ r,v_{c,i} \} \in E_C$
such that $P_i$ becomes a connected component
disconnected from $r$. 
Notice that the graph $G'_C$ consists of $k+1$ connected components. 
Moreover, each connected component of $G'_C$ is colorful. 
Indeed, each $P_i$ is colorful by construction. 
Consider the component $T$ containing the root $r$.
Notice that $T$ is colorful, since if two paths $P_i$ and $P_j$ are connected to $r$, then, by the property
of $V'$, $\{v_i,v_j\} \notin E$.

(2) Let $G'_C$ be a solution of MCC over instance $G_C$ 
consisting of $k+1$ colorful components. 
Denote by $P'_i$ the path consisting of $r$ and path $P_i$.
We construct a solution $G^*_C$ of MCC over instance $G_C$ 
consisting of at most $k+1$ colorful components as follows: 
if an edge of $P'_i$ is removed to obtain $G'_C$, $G^*_C$ is obtained
by cutting edge $\{ r,v_{c,i} \}$. 
$G^*_C$ consists of at most $k+1$ connected components, 
since it is obtained by removing no more edges than $G'_C$.  
Notice that each connected component of $G^*_C$ is colorful. 
Indeed, again each $P_i$ is colorful by construction. 
Furthermore, consider the colorful component $T$ containing the root $r$, and 
the paths $P'_i$ and $P'_j$,  with $\{v_i,v_j\} \in E$. By construction both paths contain a vertex 
colored by $c_{ij}$, hence one edge of the paths $P'_i$ or $P'_j$ must be removed. Hence 
$G'_C$ is obtained by cutting at least one edge in 
$P'_i$ or $P'_j$, thus, by construction, $T$ is colorful.
\end{proof}
}

By the previous lemma, the following result holds.

\begin{theorem}
\label{thm:MCCDIsPath}
MCC is NP-hard even when the input graph is at distance $1$ to Disjoint Paths. 
\end{theorem}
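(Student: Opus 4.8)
The plan is to combine Lemma~\ref{lem:MCCDisjPath1} with the NP-hardness of \textsc{Minimum Vertex Cover} and then observe that the tree $G_C$ produced by the reduction is at distance $1$ to disjoint paths. First I would note that Lemma~\ref{lem:MCCDisjPath1} gives, for any instance $G=(V,E)$ of MinVC and the corresponding instance $G_C=(V_C,E_C)$ of MCC, a two-way polynomial-time correspondence: $G$ has a vertex cover of size $k$ if and only if $G_C$ admits a solution with $k+1$ colorful components. Since both directions run in polynomial time and the reduction $G \mapsto G_C$ is itself polynomial (the number of paths is $|V|$, each of length at most $|E|+1$, so $|V_C| = O(|V|+|V|\cdot|E|)$), this is a polynomial many-one reduction from the decision version of MinVC to the decision version of MCC. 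As MinVC is NP-hard, MCC is NP-hard.

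Second, the point that upgrades this to the stated structural claim is that the instance $G_C$ is essentially a collection of disjoint paths glued at a single vertex: deleting the root $r$ from $G_C$ leaves exactly the vertex-disjoint paths $P_1,\dots,P_{|V|}$. Hence the \emph{Distance to Disjoint Paths} number of $G_C$ is at most $1$ (and it is exactly $1$ whenever $G$ has at least two edges incident to distinct vertices, so that $G_C$ is not itself a disjoint union of paths). Therefore the NP-hardness proved above already holds on instances of distance $1$ to disjoint paths, which is the theorem.

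The step that carries the real content is of course Lemma~\ref{lem:MCCDisjPath1}, which is assumed here; the only genuinely new verification needed in this theorem's proof is the structural observation about $G_C$, together with the bookkeeping that the correspondence "$k$ vs.\ $k+1$" is a legitimate parameter-preserving (indeed trivially shifted) reduction at the level of decision problems. The main obstacle, such as it is, is simply making sure the "$+1$" offset and the "at most $k+1$" in part~(2) of Lemma~\ref{lem:MCCDisjPath1} are handled correctly so that the if-and-only-if for the decision versions is exact: a vertex cover of size $\le k$ exists iff an MCC solution with $\le k+1$ components exists, which follows immediately from the two constructions in that lemma. No further computation is required.
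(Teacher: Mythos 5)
Your proposal is correct and follows exactly the paper's argument: invoke Lemma~\ref{lem:MCCDisjPath1} together with the NP-hardness of \textsc{Minimum Vertex Cover}, and observe that removing the root of $G_C$ leaves the disjoint paths $P_1,\dots,P_{|V|}$, so the instance has distance $1$ to disjoint paths. The extra bookkeeping you supply about the ``$+1$'' offset is a more careful spelling-out of what the paper leaves implicit, not a different route.
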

\appendixproof{Theorem~\ref{thm:MCCDIsPath}}
{
\begin{proof}
Notice that the graph $G_C$ has distance $1$ to Disjoint Path, since it is enough to remove the root to obtain
$|V|$ disjoint paths. Moreover, by Lemma~\ref{lem:MCCDisjPath1} and, by the NP-hardness of MinVC, 
the result follows.
\end{proof}
}


It is worth noticing that this result extends to parameter pathwidth or distance to interval graph, as these last parameters are ``stronger'' than distance to disjoint path in the sense of~\cite{DBLP:conf/mfcs/KomusiewiczN12}. 

\section{An FPT Algorithm for MEC Parameterized by $k$}
\label{sec:FPTMEC}

We present a parameterized algorithm for MEC with respect to the natural parameter $k$, that is the number of colorful connected component. 
\fullversion{Whereas one could obtain a parameterized algorithm using the color-coding technique~\cite{DBLP:journals/jacm/AlonYZ95} without much difficulty, 
we will show that the problem admits an exponential size kernel, which implies that the problem is in FPT.}

Given a colored graph $G$, we first compute a Depth-First-Search (DFS) $D=(V,E_D,E_B)$ of $G$.
Recall that a DFS $D$ of $G$ consists of a tree induced by
$D'=(V,E_D)$ (hence not considering edges in $E_B$), while
$E_B= E \setminus E_D$ are called \emph{backward edges} and have the following well-known property
(see~\cite{DBLP:books/daglib/0023376} for details).

\begin{lemma}
\label{lem:backedges}
Consider a graph $G$ and a DFS $D=(V,E_D,E_B)$ of $G$.
Let $\{u,v\} \in E_B$ be a backward edge. Then 
$u$ and $v$ are on a path from a leaf of $D'$ to the root
of $D'$.
\end{lemma}

We will first consider some easy cases 
where there is a solution of MEC of size at least $k$. 
Let $V_A$ be the set of vertices of $V$ which are parent of a leaf in $D'$. 
The following properties holds.

\begin{lemma}
\label{lem:pathbounded}
If there exists a path in $D'$ from the root $r(D')$ to a leaf of $D'$ of length at least $2k$, then
there exists a solution of MEC of size at least $k$. 
\end{lemma}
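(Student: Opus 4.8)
The plan is to exhibit an explicit long path in $D'$ and turn it into a colorful partition whose transitive closure has at least $k$ edges. Suppose $p = (r(D')=w_0, w_1, \dots, w_m)$ is a root-to-leaf path in $D'$ with $m \geq 2k$. The natural idea is to walk along $p$ from the top and greedily carve out maximal colorful blocks: start a block at $w_0$, keep adding consecutive vertices $w_1, w_2, \dots$ until the next vertex would repeat a color already used in the current block, then close the block and start a new one at that vertex. This produces a sequence of disjoint intervals $I_1, I_2, \dots, I_t$ that together cover all $m+1$ vertices of $p$, where each $I_j$ is colorful as an induced subpath. First I would argue that each block (except possibly the last) has size at least $2$: a block $I_j$ is closed at $w_i$ only because $c(w_{i+1})$ already appears in $I_j$, and since $I_j$ contains at least $w_i$ and we may assume the graph has no monochromatic edges, the block extends at least one more step, so $|I_j| \geq 2$ whenever it is not the last one. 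Hence $t \leq \lceil (m+1)/2 \rceil$ roughly, but more usefully each of the first $t-1$ blocks contributes at least $1$ edge (in fact at least $\binom{2}{2}=1$) to the transitive closure.

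Next I would need a solution of MEC, i.e.\ a full colorful partition of $V$, not just of the path. The clean way is: take the blocks $I_1, \dots, I_t$ as (the seeds of) components, and put every vertex of $V \setminus p$ into its own singleton component. Singletons are trivially colorful and contribute $0$; each $I_j$ is colorful by construction. To see this is a valid MEC solution we just delete all edges of $G$ except those lying inside some $I_j$ along the path $p$ — this is a subset of edge deletions, and the resulting graph's components are exactly the $I_j$'s and the singletons, all colorful. The transitive closure then has $\sum_{j=1}^t \binom{|I_j|}{2} \geq \sum_{j=1}^{t-1} \binom{|I_j|}{2} \geq \sum_{j=1}^{t-1} 1 = t-1$ edges. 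So it remains to show $t-1 \geq k$, equivalently $t \geq k+1$.

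The count of blocks is where the constant $2$ in the hypothesis pays off. The path $p$ has $m+1 \geq 2k+1$ vertices. Each of the first $t-1$ blocks has size at most... no — that is the wrong direction. Instead: the first $t-1$ blocks each have size at least $2$ (shown above), but I actually want a lower bound on $t$, so I should bound block sizes from above. A maximal colorful block along a path can be arbitrarily long, so this does not work directly, and that is the main obstacle: I cannot get a lower bound on $t$ just from maximal greedy blocks. The fix is to not take maximal blocks but to cut the path into exactly $k$ consecutive segments each of size at least $2$ — since $m+1 \geq 2k+1 > 2k$ there is room to do this (e.g.\ segments $(w_0,w_1), (w_2,w_3), \dots, (w_{2k-2},w_{2k-1})$, with the leftover vertices $w_{2k}, \dots, w_m$ absorbed into the last segment or left as singletons). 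Each such length-$\geq 2$ consecutive segment of $p$ is an induced subpath, but it need not be colorful. However, a consecutive pair $(w_{2i}, w_{2i+1})$ is an edge of $D'$, hence by the standing assumption on $G$ it is \emph{not} monochromatic, so the two-vertex component $\{w_{2i}, w_{2i+1}\}$ is colorful and contributes exactly $1$. Taking these $k$ disjoint pairs as components and all remaining vertices as singletons gives a colorful partition whose transitive closure has at least $k$ edges.

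So the key steps, in order, are: (1) extract the root-to-leaf path $p$ of length $\geq 2k$, giving $\geq 2k+1$ vertices; (2) partition the first $2k$ of them into $k$ consecutive pairs; (3) observe each pair spans an edge of $D'$ and is therefore bichromatic (using the no-monochromatic-edge convention), hence a colorful $2$-vertex component; (4) complete to a partition of $V$ with singletons and note this corresponds to deleting a subset of edges; (5) compute the transitive closure size as $\geq k \cdot \binom{2}{2} = k$. The only subtlety to watch — the step I'd expect to be most error-prone — is making sure the chosen pairs are genuinely \emph{disjoint vertex sets inducing $2$-vertex components}, i.e.\ that after deleting all other edges no pair accidentally merges with another via an edge of $G$ outside $p$; this is automatic because we delete every edge not internal to one of the chosen pairs, so each pair is isolated from the rest of the graph and the argument goes through.
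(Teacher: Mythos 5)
Your final argument is correct and is essentially the paper's own proof: the paper simply observes that a root-to-leaf path of length at least $2k$ contains a matching of $k$ edges in $D'$, and (by the standing convention that no edge is monochromatic) any matching is a feasible MEC solution contributing at least one transitive-closure edge per matched pair. The detour through maximal greedy colorful blocks is unnecessary, but you correctly identified its flaw and landed on the same matching construction.
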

\appendixproof{Lemma~\ref{lem:pathbounded}}
{
\begin{proof}
Consider a path of length at least $2k$ from $r(D')$ to a leaf of $D'$. 
It follows that there exists a matching in $D'$ (hence also in $G$)
consisting of at least $k$ edges, and the lemma follows.
\end{proof}
}

\begin{lemma}
\label{lem:boundVA}
If $|V_A| \geq k$, then 
there exists a solution of MEC of size at least $k$.
\end{lemma}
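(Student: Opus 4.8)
The plan is to exhibit a matching of $G$ of size at least $k$ and invoke the fact that any matching is a feasible solution of MEC: since $G$ contains no edge between two vertices of the same color, keeping only the matching edges and deleting all others yields a graph all of whose connected components have size at most two and are therefore colorful. Each matched pair then contributes exactly one edge to the transitive closure, so a matching of size at least $k$ immediately gives a solution of MEC of size at least $k$.

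First I would, for each vertex $v \in V_A$, fix one of its leaf children in $D'$, which I call $f(v)$; this is well defined precisely because $v \in V_A$ means that $v$ is the parent of some leaf of $D'$. I then claim that the edge set $M = \{ \{v, f(v)\} : v \in V_A \}$ is a matching of $D'$, hence of $G$. The map $f$ is injective, because in a tree a leaf has a unique parent, so $f(v) = f(v')$ forces $v = v'$. Moreover every $v \in V_A$ is an internal vertex of $D'$ (it has a child), whereas every $f(v)$ is a leaf; consequently no endpoint of one edge of $M$ can coincide with an endpoint of another, since the ``internal'' endpoints $v$ are pairwise distinct, the ``leaf'' endpoints $f(v)$ are pairwise distinct by injectivity of $f$, and an internal endpoint can never equal a leaf endpoint. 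Hence $|M| = |V_A| \geq k$.

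Finally I would remove from $G$ every edge outside $M$. In the resulting graph $G'$ the $|V_A|$ edges of $M$ form $|V_A|$ connected components of size two and all remaining vertices are isolated; each size-two component is colorful because $G$ has no monochromatic edge. Therefore the number of edges in the transitive closure of $G'$ equals $|V_A| \cdot \frac{2 \cdot 1}{2} = |V_A| \geq k$, so $G'$ is a feasible solution of MEC of size at least $k$, which proves the lemma.

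There is essentially no real obstacle here; the only point requiring care is the verification that $M$ is genuinely a matching, which rests entirely on the tree structure of $D'$ (leaves have unique parents) together with the observation that the members of $V_A$ are internal vertices of $D'$.
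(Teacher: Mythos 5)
Your proof is correct and follows essentially the same approach as the paper: for each vertex of $V_A$ pick an adjacent leaf of $D'$ and form a size-two colorful component, yielding at least $|V_A| \geq k$ disjoint components each contributing one edge to the transitive closure. You simply spell out in more detail the verification that these pairs are disjoint (i.e.\ form a matching), which the paper leaves implicit.
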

\appendixproof{Lemma~\ref{lem:boundVA}}
{
\begin{proof}
Consider a vertex $v \in V_A$ and a leaf $l$ of $D'$ adjacent to $v$ in $D'$. 
Then define a colorful component induced by $v$ and $l$. It follows that there exist 
at least $k$ colorful component in $D'$, hence in $G$, and the lemma follows.
\end{proof}
}

Now, for each vertex $v \in V_A$ we consider the leaves adjacent to $v$ and their colors. 
Define the set $C_x(v)$ as  the set of leaves colored by $c_x$ and adjacent to $v \in V_A$ in $D'$. Formally,
\fullversion{
\[
C_x(v) = \{ l: \text{there exists a leaf $l$ colored by $c_x$ adjacent to $v$}\} 
\]
}
Then the following property holds.

\begin{lemma}
\label{lem:boundColor}
Given a vertex $v \in V_A$, if there exist $\sqrt{2k}$ non-empty sets $C_x(v)$ associated with
distinct colors $c_x$, then there exists a solution of MEC of size at least $k$. 
\end{lemma}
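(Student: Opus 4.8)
The plan is to exhibit one large colorful component directly, keep every other vertex as a singleton component, and check that the transitive closure of this partition already contains at least $k$ edges.

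First I would invoke the standing assumption that $E$ contains no edge $\{u,w\}$ with $c(u)=c(w)$. Since every leaf $l$ of $D'$ adjacent to $v$ gives an edge $\{v,l\}\in E_D\subseteq E$, this forces $c(l)\neq c(v)$. Hence the (at least) $\sqrt{2k}$ distinct colors $c_x$ for which $C_x(v)\neq\emptyset$ are all distinct from $c(v)$. I would then pick, for each such color $c_x$, one representative leaf $l_x\in C_x(v)$, and set $S=\{v\}\cup\{l_x : C_x(v)\neq\emptyset\}$, so that $|S|\geq 1+\sqrt{2k}$ and the vertices of $S$ receive pairwise distinct colors (the $l_x$ by choice of one leaf per color, and $v$ by the previous observation).

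Next I would consider the solution of MEC that removes every edge of $G$ except the star edges $\{v,l_x\}$. In the resulting graph $G'=(V,E\setminus E')$ the connected component of $v$ is exactly $S$, and every other vertex is isolated; every component is therefore colorful, so $E'$ is feasible. The number of edges in its transitive closure is then at least $\binom{|S|}{2}\geq\binom{1+\sqrt{2k}}{2}=\frac{(1+\sqrt{2k})\sqrt{2k}}{2}=k+\frac{\sqrt{2k}}{2}\geq k$. The only delicate point is the possible color clash between $v$ and one of its adjacent leaves, and that is precisely what the no-monochromatic-edge assumption rules out; once that is settled, the numerical inequality $\binom{1+\sqrt{2k}}{2}\geq k$ is essentially the whole content, so I do not expect a genuine obstacle here.
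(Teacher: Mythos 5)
Your proposal is correct and follows essentially the same route as the paper: build one colorful component consisting of $v$ together with one representative leaf per non-empty $C_x(v)$, giving a component of at least $1+\sqrt{2k}$ vertices whose transitive closure has at least $\binom{1+\sqrt{2k}}{2}\geq k$ edges. Your extra check that $c(v)$ differs from the leaf colors (via the no-monochromatic-edge assumption) is a detail the paper leaves implicit, but it does not change the argument.
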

\appendixproof{Lemma~\ref{lem:boundColor}}
{
\begin{proof}
Assume that there exist $\sqrt{2k}$ non empty sets $C_x(v)$ for different colors. 
Then, define a colorful component consisting of $v$ and one vertex for each set 
$C_x(v)$.  It follows that the component consists of at least $\sqrt{2k}+1$ vertices,
hence its transitive closure contains at least $k$ edges.
\end{proof}
} 
 
%
Consider vertex $u \in C_x(v)$, for some $v \in V_A$, and 
define the following set of vertices:
\[
Adj(u)=\{w \in V: \{u,w \} \in E \} 
\]

Moreover, define the following collection $Adj(C_x(v))$ of sets of vertices:
\[
Adj(C_x(v)) = \{Adj(u): u \in C_x(v) \}  
\]

The following property holds.

\begin{lemma}
\label{lem:boundbackconnection}
Given a vertex-colored graph $G$ such that the hypothesis of Lemma~\ref{lem:pathbounded} 
does not hold,
consider a vertex $v$ in $V_A$ and a set $C_x(v)$. 
Then $|Adj(C_x(v))| \leq 2^{2k+1}$.
\end{lemma}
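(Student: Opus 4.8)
The plan is to show that $|Adj(C_x(v))|$ is bounded by a function of $k$ alone, by combining the failure of the hypothesis of Lemma~\ref{lem:pathbounded} with the ancestor property of backward edges from Lemma~\ref{lem:backedges}. I would phrase the conclusion as: the number of possible sets of vertices adjacent (via backward edges) to a node $u\in C_x(v)$, that is $|Adj(C_x(v))|$, is at most $2^{2k}$.

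First I would record that, since the hypothesis of Lemma~\ref{lem:pathbounded} does not hold, every path of $D'$ from the root $r(D')$ to a leaf has length strictly less than $2k$, hence contains at most $2k$ vertices. Fix $v\in V_A$ and a color $c_x$ with $C_x(v)\neq\emptyset$. Every $u\in C_x(v)$ is a leaf of $D'$ whose parent is $v$, so the set of ancestors of $u$ in $D'$ is exactly the vertex set of the unique $r(D')$-to-$u$ path, which therefore has at most $2k$ vertices; crucially, after discarding $u$ itself this set is the same for every $u\in C_x(v)$, namely the vertex set $P_v$ of the $r(D')$-to-$v$ path, with $|P_v|\le 2k$.

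Next I would invoke Lemma~\ref{lem:backedges}: if $\{u,w\}\in E_B$ with $u\in C_x(v)$, then $u$ and $w$ lie together on some root-to-leaf path of $D'$; since a leaf has no descendants in $D'$, the vertex $w$ must be a proper ancestor of $u$, hence $w\in P_v$. Thus for every $u\in C_x(v)$ the set of its backward neighbors is a subset of the fixed set $P_v$ of size at most $2k$, and intersecting with $V_A$ (as in the definition of $Adj$) only shrinks it further. A set of size at most $2k$ has at most $2^{2k}$ subsets, so $|Adj(C_x(v))|\le 2^{2k}$, as claimed.

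There is no genuine obstacle in this argument; the two points that need care are (i) justifying that a backward edge leaving a leaf necessarily ends at an ancestor of that leaf --- this is where Lemma~\ref{lem:backedges} together with the observation that a leaf has no descendants in $D'$ is essential --- and (ii) performing the count over the \emph{common} ancestor set $P_v$ shared by all leaves of $C_x(v)$, so that the $2^{2k}$ bound carries no hidden dependence on $|C_x(v)|$ or on $n$. This bound is precisely what will afterwards let each $C_x(v)$ be compressed to a single representative leaf per element of $Adj(C_x(v))$ --- two leaves of $C_x(v)$ with equal backward neighborhoods being interchangeable in any colorful partition --- which, combined with Lemma~\ref{lem:boundVA} and Lemma~\ref{lem:boundColor}, yields the exponential kernel for MEC parameterized by $k$.
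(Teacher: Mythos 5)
Your proof is correct and follows essentially the same route as the paper's: the failure of the hypothesis of Lemma~\ref{lem:pathbounded} bounds the root-to-leaf paths of $D'$ by roughly $2k$ vertices, Lemma~\ref{lem:backedges} forces every backward edge leaving a leaf of $C_x(v)$ to land on an ancestor of that leaf, and the bound follows by counting subsets of that ancestor set. Your explicit observation that all leaves of $C_x(v)$ share the \emph{same} ancestor set $P_v$ is a welcome sharpening: the paper only bounds the degree of each individual leaf by $2k+1$ and then counts $2^{2k+1}$ subsets, which tacitly relies on exactly this common-universe fact.
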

\appendixproof{Lemma~\ref{lem:boundbackconnection}}
{
\begin{proof}
Consider the vertices in $C_x(v)$. 
By construction each of such vertex is adjacent to exactly one vertex in $D'$; moreover, we claim that
each vertex $l$ in $C_x(v)$ is adjacent to at most $2k+1$ vertices in $D$. 
Indeed, if $l$ is adjacent to more than $2k+1$ vertices in $D$, then there exist $2k$ vertices
on the path from the root of $D$ to $l$ such that $l$ is connected to these vertices via backward edges.
Then there exists a path in $D'$ from the root $r(D')$ to a leaf of $D'$ of length at least $2k$ 
and Lemma~\ref{lem:pathbounded} holds. 
Hence, it holds that each vertex $l$ in $C_x(v)$ is adjacent to at most $2k+1$ vertices in $D$.
But then, the number of possible subsets
of vertices adjacent to a vertex in $C_x(v)$ is bounded by $2^{2k+1}$,
hence $|Adj(C_x(v))| \leq 2^{2k+1}$.
\end{proof}
}

Based on Lemma~\ref{lem:boundbackconnection}, we can partition the vertices of each $C_x(v)$ into sets $C_{x,1}(v), \dots, C_{x,p}(v)$, with $p \leq 2^{2k+1}$, based on the fact that two vertices of $C_x(v)$ belong to the same set $C_{x,t}(v)$ if they have the same set of adjacent vertices.
Since by Lemma~\ref{lem:boundbackconnection} $|Adj(C_x(v))| \leq 2^{2k+1}$, the number of possible subsets of $C_x(v)$ is at most $2^{2k+1}$, hence $p \leq 2^{2k+1}$.


Now, assume that the hypotheses of Lemma~\ref{lem:pathbounded}, Lemma~\ref{lem:boundVA} and Lemma~\ref{lem:boundColor} do not hold.
Consider an algorithm that, for each set $C_{x,i}(v)$, 
computes a set $C'_{x,i}(v)$ by picking at most $k$ vertices of $C_{x,i}(v)$ and removing the other vertices of $C_{x,i}(v)$. 
Let $G'$ be the resulting graph. 
We claim that $G'$ contains at most $O(k^2 2^{2k+1})$ vertices.
First, notice that each $C'_{x,t}(v)$ contains at most $k$ vertices and that, for each vertex $v$,
there exists at most $2^{2k+1}$ sets $C'_{x,t}(v)$. 
Since, there exist at most $O(k \sqrt{k})$ sets $C_x(v)$
(at most $ \sqrt{2k}$ colors $c_x$ and at most 
$k$ vertices $v \in V_A$), we can conclude that $G'$ contains
at most $O(k^2\sqrt{k} 2^{2k+1})$ vertices in sets $C'_{x,i}(v)$. 

Now, consider the vertices $G'$ which are not contained in some set $C'_{x,i}(v)$.
These vertices correspond to internal vertices of $D'$. Since the hypothesis of Lemma~\ref{lem:pathbounded} does not hold,
$D'$ is a tree of depth at most $2k$, 
and there exist at most $k$ vertices adjacent to leaves, as $|V_A| <k$. Hence there exist at most $k$ paths of length $2k$ in $D'$ 
from the root to vertices adjacent to leaves, thus we can conclude that there exist at most $2k^2$ internal vertices in $D'$.
Hence there exists at most $2k^2$ vertices in $G'$ which are not contained in some set  $C'_{x,i}(v)$.

Now, we prove that $(G',k)$ is a kernel for MEC.


\begin{lemma}
\label{lem:expkernel}
There exists a collection of disjoint colorful components $V_1, \dots , V_h$
of size at least $2$ in $G$ 
if and only if there exists a collection of disjoint colorful components $V'_1, \dots , V'_h$ in $G'$,
with $|V_i|=|V'_i|$, $1 \leq i \leq h$.
\end{lemma}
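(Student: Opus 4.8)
The plan is to prove both directions by a "transplant" argument, moving colorful components between $G$ and $G'$ while preserving their sizes and the colorful property. The backward implication is the easy one: since $G'$ is (essentially) a subgraph of $G$ obtained only by deleting some leaves of $D'$, any collection of disjoint colorful components $V_1',\dots,V_h'$ in $G'$ is literally still a collection of disjoint colorful components in $G$, because connectivity and the "no repeated color" condition are both inherited when we add back vertices and edges. So I would first state that $V(G')\subseteq V(G)$ and $E(G')\subseteq E(G)$ by construction, and then the backward direction is immediate with $V_i := V_i'$.

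For the forward direction, suppose $V_1,\dots,V_h$ are disjoint colorful components of size at least $2$ in $G$. The idea is to repair each $V_i$ so that it only uses vertices surviving in $G'$. The vertices of $G$ that are \emph{not} in $G'$ are exactly the leaves of $D'$ that were discarded when we trimmed each class $C_{x,t}(v)$ down to at most $k$ representatives $C'_{x,t}(v)$. So I would process the discarded leaves one component at a time. Fix a component $V_i$ and a discarded leaf $\ell \in V_i$; say $\ell \in C_{x,t}(v)$, i.e. $\ell$ is a leaf of color $c_x$ attached to $v\in V_A$, with a fixed set $S$ of backward-neighbours recorded by the class index $t$. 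The key structural fact I want to use is Lemma~\ref{lem:boundbackconnection}: all leaves in the same class $C_{x,t}(v)$ have the \emph{identical} neighbourhood in $G$ (namely $v$ together with the backward-neighbour set $S$). Since $|V_i|\le k$ (each $V_i$ is a colorful component of the DFS-trimmed instance; more carefully, if some $V_i$ had size $>k$ already its transitive closure would exceed $k$, so we may assume each $|V_i|\le k$), $V_i$ contains at most $k$ leaves from $C_{x,t}(v)$, but $C'_{x,t}(v)$ retains $\min(k,|C_{x,t}(v)|)$ of them; hence there is always a surviving leaf $\ell'\in C'_{x,t}(v)$ not yet used by any of the (at most $k$) components being repaired. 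Replace $\ell$ by $\ell'$ in $V_i$.

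It remains to check that the swap preserves everything. Connectivity: $\ell$ was a leaf, so inside $G[V_i]$ it is attached only to $v$ (if $v\in V_i$) and possibly to some backward-neighbours in $S\cap V_i$; since $\ell'$ has exactly the same neighbour set $\{v\}\cup S$, the graph $G[(V_i\setminus\{\ell\})\cup\{\ell'\}]$ has the same adjacencies on the common part and $\ell'$ connects to the same vertices $\ell$ did, so it stays connected. Colorfulness: $\ell'$ has the same color $c_x$ as $\ell$, so no new color collision is introduced. Disjointness: we only ever pick $\ell'$ outside all the components, and once chosen it is marked used, so the repaired components stay pairwise disjoint; and $|V_i|$ is unchanged. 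Iterating over all discarded leaves of all components yields colorful components $V_1',\dots,V_h'$ living entirely in $G'$ with $|V_i'|=|V_i|$. The main obstacle is bookkeeping the counting argument so that a free surviving representative always exists: this is where one needs both that each component has size $\le k$ (so it uses $\le k$ leaves of any one class, and there are $\le k$ components needing repair so $\le k$ of them are "in use" simultaneously per class — actually one must be slightly careful and either argue $h\le k$ from the size bound on the transitive closure, or repair greedily keeping a budget) and that we kept $k$ representatives per class in $C'_{x,t}(v)$, matching exactly this worst case. I would make that matching explicit as the crux of the proof.
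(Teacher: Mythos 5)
Your overall strategy coincides with the paper's: the backward direction is immediate because $G'$ is obtained from $G$ only by deleting vertices, and the forward direction repairs each component by swapping every discarded leaf for a surviving representative of the same class $C_{x,t}(v)$, exploiting the facts that all leaves of one class carry the same colour (so colourfulness is preserved) and have the same set of neighbours in $G$ (so connectivity is preserved, exactly as in the paper's argument that each replacement vertex "is connected to the same set of vertices of $D$"). The one step that does not close as you wrote it is the counting that guarantees a free representative: you bound the number of leaves of a fixed class used by a single component $V_i$ by $k$ (via $|V_i|\leq k$), so with up to $k$ components the total demand on one class could be as large as $k^2$, while only $k$ representatives are retained in $C'_{x,t}(v)$, and the greedy swap could run out. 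The observation you are missing --- and the one the paper's proof leads with --- is that all leaves of $C_{x,t}(v)$ have the \emph{same} colour $c_x$, so a colorful component contains at most one of them; the demand on a class is therefore at most the number of components that meet it, which is the quantity to compare against $|C'_{x,t}(v)|$. With that substitution your bookkeeping goes through, and the residual point you flag (why at most $k$ components can meet a given class when $|C_{x,t}(v)|>k$) is handled no more explicitly in the paper than in your sketch, so apart from this one counting slip the two proofs are essentially identical.
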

\appendixproof{Lemma~\ref{lem:expkernel}}
{
\begin{proof}
Obviously if there exists a collection of disjoint colorful components $V'_1, \dots , V'_h$ of size at least $2$ in $G'$,
then there exists a collection of disjoint 
colorful components $V_1, \dots , V_h$ in $G$ with 
$|V_i|=|V'_i|$, $1 \leq i \leq h$.

Now, consider a collection of disjoint colorful components $V_1, \dots , V_h$ of size at least $2$ in $G$.
Notice that for each $V_i$ at most one vertex can be in some sets $C_{x,i}(v)$ and that,
if $|C'_{x,i}(v)|=t \leq k$, at most $t$ colorful components in $V_1, \dots , V_h$ can
contain a node in $C_{x,i}(v)$.

Now, we compute $V'_1, \dots , V'_h$ as follows. For each 
$C'_{x,i}(v)$ partition its vertices assigning a vertex to $V_j'$ if and only if $V_j$ contains a
vertex in $C_{x,i}(v)$. 
Then partition the internal vertices of $D$ as they are partitioned by $V_1, \dots V_h$, that is assign 
vertex $u$ to $V'_j$ if and only if $u \in V_j$.

Now, by construction $V'_1, \dots , V'_h$ are disjoint and $|V_i|=|V'_i|$, for each $1 \leq i \leq h$.
Moreover, each $V'_i$ is colorful, since $V_i$ is colorful and we have added to $V'_i$ vertices
having the same colors as those of $V_i$.
Finally, notice that each $V'_i$ is a connected component. 
First, notice that the leaves of $D$ are not adjacent by the property of DFS, and that 
they are only connected to internal vertices of $D$.
Now, consider $V_i$ and $V'_i$. The 
two components contain the same subset of internal vertices of $D$; for each vertex $v_i$ in $V_i$ there is a corresponding
vertex $v'_i$ in $V'_i$ that is connected to the same set of vertices of $D$. Then, since $V_i$ is a connected component,
also $V'_i$ is connected component.
\end{proof}
}

Hence we have the following result.

\begin{theorem}
\label{th:expo kernel mec}
There exists a kernel of size $O(k^2\sqrt{k} 2^{2k+1})$ for MEC.
\end{theorem}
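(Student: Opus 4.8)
The plan is to turn the preceding case analysis into a kernelization procedure. First I would compute a DFS $D=(V,E_D,E_B)$ of $G$ and test the hypotheses of Lemmas~\ref{lem:pathbounded}, \ref{lem:boundVA} and~\ref{lem:boundColor}. If any of them holds, $G$ admits a feasible MEC solution whose transitive closure has at least $k$ edges, so $(G,k)$ is a yes-instance, and I output a fixed small yes-instance, say a clique on $\sqrt{2k}+1$ vertices all of distinct colours together with the same value $k$: it is already colourful, hence feasible, and its transitive closure has at least $k$ edges. This instance has $O(\sqrt{k})$ vertices and is produced in polynomial time, so it is a valid kernel of the required size.

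If none of the three hypotheses holds, I build $G'$ exactly as in the paragraph preceding Lemma~\ref{lem:expkernel}: for every $v\in V_A$ and every colour $c_x$ with $C_x(v)\neq\emptyset$, I split $C_x(v)$ into the groups $C_{x,1}(v),\dots,C_{x,p}(v)$ of vertices with equal $D$-neighbourhood (so $p\leq 2^{2k+1}$ by Lemma~\ref{lem:boundbackconnection}), keep an arbitrary subset $C'_{x,t}(v)$ of at most $k$ vertices of each group, delete the remaining leaves, and keep all internal vertices of $D'$ together with all edges among the surviving vertices, leaving the colours and $k$ unchanged; this is clearly polynomial time. The size count has already been carried out in the text: at most $k$ choices of $v\in V_A$, at most $\sqrt{2k}$ colours per $v$, at most $2^{2k+1}$ groups per colour and at most $k$ surviving vertices per group give $O(k^2\sqrt{k}\,2^{2k+1})$ surviving leaves, while $D'$ has depth less than $2k$ (Lemma~\ref{lem:pathbounded} being excluded) and fewer than $k$ leaf-parents (Lemma~\ref{lem:boundVA} being excluded), hence at most $2k^2$ internal vertices. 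Therefore $G'$, an induced subgraph of $G$, has $O(k^2\sqrt{k}\,2^{2k+1})$ vertices, as claimed.

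It remains to check that this self-reduction preserves the answer, and here the key observation is that the MEC objective $\sum_i n_i(n_i-1)/2$ of a feasible partition of $V$ into colourful connected pieces depends only on the multiset of piece sizes, and that pieces of size $1$ contribute $0$. Consequently the optimum of MEC on any graph equals the maximum of $\sum_i|V_i|(|V_i|-1)/2$ over collections of pairwise disjoint colourful vertex sets $V_1,\dots,V_h$, each of size at least $2$ and each inducing a connected subgraph (the remaining vertices forming singletons). Applying Lemma~\ref{lem:expkernel} to an optimal such collection of $G$ produces a collection in $G'$ with the same sizes, and conversely, so $\mathrm{OPT}(G)=\mathrm{OPT}(G')$; since $k$ is unchanged, $(G,k)$ and $(G',k)$ are equivalent instances, which together with the easy-case branch proves the theorem. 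I expect the only delicate point to be the bookkeeping, in particular the $O(k^2)$ bound on the internal vertices of $D'$ (which uses both the depth bound and $|V_A|<k$) and the reduction, via the reformulation of the objective through component sizes, of the feasibility statement of Lemma~\ref{lem:expkernel} to the equality $\mathrm{OPT}(G)=\mathrm{OPT}(G')$.
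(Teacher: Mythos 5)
Your proposal is correct and follows essentially the same route as the paper: build $G'$ by the reduction described before Lemma~\ref{lem:expkernel}, bound its size using Lemmas~\ref{lem:pathbounded}--\ref{lem:boundbackconnection}, and invoke Lemma~\ref{lem:expkernel} for the equivalence of the instances. The only additions are points the paper leaves implicit, namely the explicit output of a trivial yes-instance when one of the easy-case lemmas applies and the observation that preserving the multiset of component sizes preserves the MEC objective; both are correct and welcome clarifications rather than deviations.
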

\appendixproof{Theorem~\ref{th:expo kernel mec}}
{
\begin{proof}
The result follows from Lemma~\ref{lem:expkernel} and from the fact that graph $G'$ contains 
at most $k^2$ internal vertices of $D$ (by Lemma~\ref{lem:pathbounded} and by Lemma~\ref{lem:boundVA})
and at most $O(k \sqrt{k}2^{2k+1})$ sets $C_{x,i}(v)$ (by Lemma~\ref{lem:boundbackconnection} 
and by Lemma~\ref{lem:boundColor}), each of size bounded by $k$.
\end{proof}
}

\section{A poly-kernel for MEC on trees}
\label{sec:kernelMECtrees}

In this section, we show that in the special case of MEC where the input graph is a tree, the kernel size can be quadratic.
The algorithm is similar to the one of Section~\ref{sec:FPTMEC}.
Consider a colored tree $G_T=(V,E,C)$, and let $r(G_T)$ denote the root of $G_T$. 
Lemmata~\ref{lem:pathbounded},\ref{lem:boundVA},\ref{lem:boundColor} hold for $G_T$.
Hence, we focus only on the leaves of $G_T$. 


Since $G_T$ is a tree, it follows that a leaf $u$ having ancestor $v$ belongs 
to a component of size at least $2$ only if $u$ and $v$ belongs to the same component. 
It follows that among the leaves having the same color $c_x$ and adjacent to
a vertex $u$, only one can belong to a colorful component of size at least $2$. 
Hence, given $v \in V_A$,
let $C_x(v)$ be the set of leaves adjacent to $v$ and colored by $c_x$. We remove all but one vertex from $C_x(v)$.
Let $G_T'$ be the resulting tree. We have the following property for $G_T'$.

\begin{lemma}
\label{lem:polykernel}
\sloppy
There exists a collection of disjoint colorful components $V_1, \dots , V_h$ of size at least $2$ in $G_T$ 
if and only if there exists a collection of disjoint colorful components $V'_1, \dots , V'_h$ in $G_T'$,
with $|V_i|=|V'_i|$, $1 \leq i \leq h$.
\end{lemma}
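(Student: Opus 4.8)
The statement is a reduction-correctness lemma of exactly the same shape as Lemma~\ref{lem:expkernel}, so the plan is to mirror that proof, but the key observations are simpler because $G_T$ is a tree (there are no backward edges to worry about). The plan is to prove the two directions separately.

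For the ``if'' direction (components in $G_T'$ lift to components in $G_T$): this is immediate, since $G_T'$ is an induced subgraph of $G_T$ obtained only by deleting some leaves. Any collection $V_1',\dots,V_h'$ of disjoint colorful components in $G_T'$ is, verbatim, a collection of disjoint colorful components in $G_T$ (a connected induced subgraph of $G_T'$ is a connected induced subgraph of $G_T$, and being colorful is inherited), and the sizes are unchanged. So I would dispose of this in one or two sentences.

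For the ``only if'' direction (components in $G_T$ project to components in $G_T'$): let $V_1,\dots,V_h$ be disjoint colorful components of size at least $2$ in $G_T$. First I would record the structural fact already stated before the lemma: if $u$ is a leaf of $G_T$ lying in a component $V_i$ with $|V_i|\ge 2$, then the (unique) parent $v$ of $u$ also lies in $V_i$, because $u$ has degree $1$ and the component containing $u$ is connected of size $\ge 2$. Consequently, for a fixed $v\in V_A$ and a fixed color $c_x$, at most one vertex of $C_x(v)$ can be used across all the $V_i$'s: if two leaves $u,u'\in C_x(v)$ lay in components $V_i,V_j$ of size $\ge 2$, then $v\in V_i\cap V_j$, forcing $i=j$, and then $V_i$ would contain two vertices of color $c_x$, contradicting colorfulness. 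Hence each nonempty $C_x(v)$ contributes at most one leaf to $\bigcup_i V_i$. Now build $V_i'$ as follows: for every $i$, every internal (non-leaf-of-$G_T$) vertex of $V_i$ is kept (these all survive in $G_T'$), and for every leaf $u\in V_i$, say $u\in C_x(v)$, replace $u$ by the unique surviving representative $u^\star$ of $C_x(v)$ in $G_T'$. Since at most one $C_x(v)$-leaf is used over all components, the map is well defined and the resulting sets $V_1',\dots,V_h'$ are disjoint with $|V_i'|=|V_i|$. Each $V_i'$ is colorful because $u^\star$ has the same color $c_x$ as the leaf $u$ it replaces, and each $V_i'$ is connected because $u^\star$ has the same neighbour (its parent $v$) as $u$ did, and $v\in V_i'$; the induced subgraph on the internal vertices is unchanged, so connectivity is preserved.

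The only point that needs a little care — and the one I would write out explicitly — is the well-definedness of the replacement map, i.e. that two distinct leaves in two distinct components are never mapped to the same surviving vertex, and that a component's internal skeleton plus its (single, if any) replaced leaf per $(v,c_x)$ pair really stays connected. Both follow from the ``parent is in the same component'' observation together with colorfulness, so this is the heart of the argument but is short; there are no genuine obstacles. The quadratic kernel bound itself is not part of this lemma's statement (it would follow afterwards by combining this with Lemmata~\ref{lem:pathbounded}, \ref{lem:boundVA}, \ref{lem:boundColor}), so I would not address it here.
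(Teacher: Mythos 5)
Your proposal is correct and follows essentially the same route as the paper's proof: the lift from $G_T'$ to $G_T$ is immediate, and the other direction keeps the internal vertices and replaces each used leaf of $C_x(v)$ by the unique surviving representative, justified by the observation (stated in the paper just before the lemma) that a leaf in a component of size at least $2$ must share that component with its parent, so at most one leaf per pair $(v,c_x)$ is ever used. Your write-up is in fact somewhat more careful than the paper's about well-definedness and connectivity, but the underlying argument is identical.
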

\appendixproof{Lemma~\ref{lem:polykernel}}
{
\begin{proof}
Obviously if there exists a collection of disjoint colorful components $V'_1, \dots , V'_h$ in $G_T'$ of size at least $2$,
then then there exists a collection of disjoint colorful components $V_1, \dots , V_h$ in $G_T$ with 
$|V_i|=|V'_i|$, $1 \leq i \leq h$.

For the reverse direction, consider a colorful component $V_i$ of $G_T$ of size at least $2$ containing vertex $v \in V_A$. 
We compute a corresponding component $V'_i$ of $G_T'$ having the same size of $V_i$ as follows.
We add all the internal vertices of $V_i$ to $V'_i$; for each color $c_x$ such that there exists a leaf $u$ in $V_i$ adjacent to $v$ and colored by $c_x$, we add the only vertex of set $C_x(v)$ to $V'_i$.

By construction, since the components $V_1, \dots , V_h$ 
are colorful, the same property holds for 
components $V'_1, \dots V'_h$.
Moreover, since components $V_1, \dots V_h$ are disjoint, the same property holds for components $V'_1, \dots V'_h$, 
as each vertex $v \in V_A$ belongs only to one connected component $V_i$, and since $G_T$ is a tree, 
the same property holds for each leaf adjacent to $v$.
\end{proof}
}

\begin{theorem}
\label{th:poly kernel trees}
There exists a kernel of size $O(k^2)$ for MEC on trees.
\end{theorem}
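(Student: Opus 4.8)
The plan is to combine the reduction rule of Lemma~\ref{lem:polykernel} with the bounds already established in Lemmata~\ref{lem:pathbounded},~\ref{lem:boundVA} and~\ref{lem:boundColor}, exactly in the spirit of Theorem~\ref{th:expo kernel mec} but exploiting the tree structure to avoid the exponential blow-up coming from backward edges (which simply do not exist in a tree). First I would dispose of the easy cases: if the path in $G_T'$ from $r(G_T)$ to some leaf has length at least $2k$, or if $|V_A|\geq k$, or if some $v\in V_A$ has $\sqrt{2k}$ non-empty sets $C_x(v)$ with distinct colors, then by Lemmata~\ref{lem:pathbounded},~\ref{lem:boundVA},~\ref{lem:boundColor} we already have a solution of MEC of size at least $k$, so we can answer \textsc{yes} and output a trivial constant-size \textsc{yes}-instance. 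Hence assume none of these hypotheses holds.

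Next I would apply the reduction rule preceding Lemma~\ref{lem:polykernel}: for every $v\in V_A$ and every color $c_x$, keep a single leaf of $C_x(v)$ and delete the rest; Lemma~\ref{lem:polykernel} guarantees that this preserves the optimum value of MEC (the collection of colorful components of size at least $2$, and hence, since isolated vertices contribute $0$ to the transitive closure, the MEC optimum itself is preserved). It remains to bound the size of the resulting tree $G_T'$. The leaves of $G_T'$ that remain are in bijection with pairs $(v,c_x)$ where $v\in V_A$ and $C_x(v)\neq\emptyset$; since $|V_A|<k$ and each such $v$ has fewer than $\sqrt{2k}$ distinct colors among its leaves, there are at most $O(k\sqrt{k})=O(k^{3/2})$ leaves. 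For the internal vertices: since the root-to-leaf paths have length less than $2k$ and there are fewer than $k$ vertices of $V_A$, the tree is the union of fewer than $k$ root-to-$V_A$ paths each of length $O(k)$, so there are $O(k^2)$ internal vertices. Adding the leaves gives $|V(G_T')| = O(k^2)$, and since $k$ is bounded (otherwise we are in an easy \textsc{yes} case) by the size of the instance, the parameter also satisfies $k'\leq k$ trivially; this is the claimed $O(k^2)$ kernel.

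The only subtlety — and the one place I would be careful — is verifying that the deletion rule is a valid \emph{kernelization} and not merely value-preserving: one must check that after deleting leaves the parameter $k$ does not need to change, that the new instance is a legitimate MEC instance (still a properly colored tree, which is immediate since we only delete vertices), and, most importantly, that the equivalence in Lemma~\ref{lem:polykernel} combined with the contribution of size-$2$ components (each contributing $1$) and isolated vertices (contributing $0$) really does give ``$G_T$ has an MEC solution of value $\geq k$ iff $G_T'$ does''. This follows because Lemma~\ref{lem:polykernel} is stated for the full collection of components of size at least $2$, so any optimal solution decomposes into its non-trivial part (matched edges and interesting components), which transfers both ways without changing sizes, while the trivial part is irrelevant to the objective. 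Given that, the theorem is immediate, so the main ``obstacle'' is really just bookkeeping rather than any new combinatorial idea.
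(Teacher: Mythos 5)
Your proposal is correct and follows essentially the same route as the paper: handle the easy cases via Lemmata~\ref{lem:pathbounded}, \ref{lem:boundVA} and~\ref{lem:boundColor}, apply the leaf-pruning rule justified by Lemma~\ref{lem:polykernel}, and count $O(k^2)$ internal vertices plus $O(k\sqrt{k})$ surviving leaves. The extra care you take about the reduction being a genuine kernelization (parameter unchanged, size-$2$ and trivial components accounted for) is a reasonable elaboration of what the paper leaves implicit, not a different argument.
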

{
\begin{proof}
The correctness of the construction of $G_T'$ follows from Lemma~\ref{lem:polykernel}.
$G_T'$ contains at most $k^2$ internal vertices (by Lemma~\ref{lem:pathbounded} and by Lemma~\ref{lem:boundVA}).
Moreover, by Lemma~\ref{lem:boundColor}, $G_T'$ contains at most $O(k \sqrt{k})$ sets $C_{x,i}(v)$ (recall that these sets are defined such that two vertices of $C_x(v)$ are in the same set $C_{x,i}(v)$ if they have the same set of adjacent vertices), each of size bounded by $1$.
Hence the total number of vertices is bounded by $O(k^2)$ and the total number of edges, since $G_T'$ is a tree is bounded by $O(k^2)$.
\end{proof}
}

\section{Structural parameterization of MEC}
\label{sec:MECDisjPath}

In this section, we consider the MEC problem restricted
to paths and graph at bounded distance from disjoint path.
We show that, after appropriate modifications, 
the results on structural parameterization for MCC hold also for MEC .

\begin{theorem}
\label{thm:mecpathpoly}
MEC on paths can be solved in $O(n^2)$-time.
\end{theorem}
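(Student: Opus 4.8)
The plan is to mimic the dynamic programming used for MCC on paths (Theorem~\ref{thm:MCCpathspoly}), changing only the objective function that is accumulated. Assume the input is a path $G_P=(V,E,C)$ with vertices ordered $v_1,\dots,v_n$. I would define $M[j]$ to be the maximum number of edges in the transitive closure of a solution of MEC on the restriction of $G_P$ to the prefix $\{v_1,\dots,v_j\}$. The key structural observation is that, on a path, every colorful component of a feasible solution is itself a contiguous subpath $v_{t+1},\dots,v_j$ (cutting an edge splits a subpath into two subpaths), so the last component in a prefix solution is determined by a single split point $t$, and the contribution of that last component to the transitive closure is $\binom{j-t}{2}$, provided $v_{t+1},\dots,v_j$ is colorful.

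Concretely, the recurrence is
\[
M[j] = \max_{0 \leq t < j} \left( M[t] + \binom{j-t}{2} \right), \text{ such that } v_{t+1},\dots,v_j \text{ induce a colorful component,}
\]
with base cases $M[0]=0$ and $M[1]=0$. The correctness proof is an induction on $j$ completely parallel to the one in Theorem~\ref{thm:MCCpathspoly}: for the forward direction, an optimal prefix solution picks some last colorful block $v_{t+1},\dots,v_j$, and removing it leaves an optimal prefix solution on $\{v_1,\dots,v_t\}$ by a standard exchange argument (the blocks are independent once the split is fixed), so $M[j]$ is at least the claimed value; conversely any value attained by the recurrence corresponds to an explicit feasible solution. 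The answer to the MEC instance is $M[n]$.

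For the running time, the table has $n$ entries. Naively each entry tries all $t<j$ and, for each, must test whether $v_{t+1},\dots,v_j$ is colorful; maintaining for each $j$ the largest $t$ such that the suffix of the prefix starting at $t+1$ is colorful (a two-pointer sweep using a per-color last-seen array) lets the colorfulness test be done in $O(1)$ amortized, so each entry costs $O(n)$ and the total is $O(n^2)$. Even the crudest implementation — recomputing colorfulness from scratch in $O(n)$ for each pair $(t,j)$ — gives $O(n^3)$, which matches the claimed bound, so no cleverness is actually required.

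I do not expect any serious obstacle here; the only point needing a line of care is the observation that on a path the components are exactly contiguous blocks and that the per-block contributions are additive and independent given the cut set, which is what makes the one-dimensional DP valid. If one wanted, the statement and its proof could simply be folded into a remark after Theorem~\ref{thm:MCCpathspoly}, noting that replacing "$M[t]+1$" by "$M[t]+\binom{j-t}{2}$" and "minimum" by "maximum" converts the MCC algorithm into the MEC algorithm verbatim.
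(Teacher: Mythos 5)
Your proposal is correct and follows essentially the same dynamic programming as the paper's own proof: the same prefix table $M[j]$, the same split over the last contiguous colorful block $v_{t+1},\dots,v_j$ contributing $\binom{j-t}{2}$ edges, and the same induction on $j$. (In fact your version states the recurrence more cleanly, with $\max$ and the correct closed form, where the paper's write-up carries over some min/max and index typos from the MCC case.)
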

\appendixproof{Theorem~\ref{thm:mecpathpoly}}
{
\begin{proof}
Define $M[j]$ as the minimum number of colorful components of a solution of MEC over instance $G_P$ restricted
to vertices $\{ v_1,\dots,v_j \}$.
$M[j]$, with $j>1$, can be computed as follows:
\begin{small}
\[ M[j] = \min_{0\leq t < j}
M[t] + \frac{(j-t-1)(j-t)}{2}, \text{ s.t. $v_{t+1}, \dots, v_j$ induce a colorful component}
\]
\end{small}

In the base cases, it holds $M[1]=0$, and $M[0]=0$. 
Next, we prove the correctness of the dynamic programming recurrence.

We claim that given a path $G_P=(V,E,C)$ instance of MEC, there exists a solution of MEC on instance $G_P$ restricted to vertices $\{ v_1,\dots,v_j \}$ with a transitive closure consisting of $h$ edges if and only if $M[j]=h$.
The base cases obviously hold.

We prove the claim by induction on $j$.
Consider the case that $M[j]=h$ and assume that 
$M[j]= M[t]+\frac{(j-t+1)(j-t)}{2}$, for some $0 \leq t \leq j$. 
By induction hypothesis, if $t>0$
there exists a solution of MEC on instance $G_P$ restricted to vertices 
$\{ v_1,\dots,v_t \}$ having a transitive closure consisting of $h- \frac{(j-t+1)(j-t)}{2}$ edges, thus there exists  a solution of MEC on instance $G_P$ restricted to vertices $\{ v_1,\dots,v_j \}$ having a transitive closure consisting of $h$ edges.
If $t=0$, $M[0]=0$, thus $M[j] = \frac{(j-t+1)(j-t)}{2} = h$.

Assume that there exists a solution of MEC on instance $G_P$ restricted to vertices $\{ v_1,\dots,v_j \}$ having a transitive closure consisting of $h$ edges.
Consider the colorful component that includes $v_j$, and assume that it is induced by vertices $v_{t+1}, \dots, v_j$, with $0 \leq t \leq j$. 
By induction hypothesis, it follows that $M[t]=h-\frac{(j-t+1)(j-t)}{2}$, and furthermore that the connected component induced by $v_{t+1}, \dots, v_j$ is colorful, thus $M[j]=h$, proving the claim.

It is easy to see that the value of an optimal solution of MEC on path $G_P=(V,E,C)$ is stored in $M[n]$. 
Since the table $M[j]$ consists of $n$ entries and each entry can be computed in time $O(n)$ as for MCC, it follows that MEC on paths can be computed in time $O(n^2)$.
\end{proof}
}

Similarly to MCC, MEC is NP-hard even if we restrict the instance to graphs having distance $1$ to Disjoint Paths. As for MCC, it is worth noticing that this hardness result extends to other stronger parameters like pathwidth~\cite{DBLP:conf/mfcs/KomusiewiczN12}.

\begin{theorem}
\label{thm:MECDIsPath}
MEC is NP-hard even when the input graph has distance $1$ to Disjoint Paths. 
\end{theorem}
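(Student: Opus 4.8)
The plan is to mimic the reduction from \textsc{Minimum Vertex Cover} used for MCC in Theorem~\ref{thm:MCCDIsPath}, but with a slight twist in the gadget construction so that the objective value of MEC (the size of the transitive closure) encodes a vertex cover size in a controlled way. Recall that in the MCC reduction, starting from an instance $G=(V,E)$ of MinVC we built a rooted tree $G_C$ consisting of one path $P_i$ per vertex $v_i \in V$, all attached to a common fresh-colored root $r$, where the path $P_i$ carries the color $c_i$ together with one ``edge-color'' vertex $c_{ij}$ for each incident edge $\{v_i,v_j\}$. The crucial property established in Lemma~\ref{lem:MCCDisjPath1} was that colorful solutions correspond exactly to choices of paths to detach from $r$, and that the set of detached paths must form a vertex cover. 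Since $G_C$ has distance $1$ to disjoint paths (deleting $r$ leaves the disjoint paths $P_1,\dots,P_{|V|}$), the same graph will serve as the MEC instance.

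What changes is the analysis of the objective. For MEC we want to argue that an optimal solution still detaches exactly the paths corresponding to a minimum vertex cover. First I would observe (as in Lemma~\ref{lem:MCCDisjPath1}(2), and using an argument analogous to Lemma~\ref{lem:MCCTressHard1}) that without loss of generality an optimal MEC solution only cuts edges of the form $\{r, v_{c,i}\}$: any cut strictly inside a path $P_i$ can be ``pushed up'' to the edge joining $P_i$ to $r$ without decreasing the transitive closure, since merging a path fragment back with $r$ only adds edges to the closure, and colorfulness of the component containing $r$ is controlled exactly by whether enough paths are detached to cover all edges. Thus every relevant solution is described by a subset $D \subseteq V$ of detached paths, and $D$ must be a vertex cover of $G$ (otherwise two paths sharing an edge-color $c_{ij}$ remain attached to $r$, violating colorfulness). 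The transitive closure value is then a fixed function of $|D|$: it is the closure contributed by the $|D|$ detached paths (each of fixed length, hence fixed contribution) plus the closure of the single big component formed by $r$ together with the $|V|-|D|$ remaining paths. Since each path has the same number of vertices, say $\ell$, the big component has $1 + (|V|-|D|)\ell$ vertices and its closure is $\binom{1+(|V|-|D|)\ell}{2}$, which is a strictly decreasing function of $|D|$; meanwhile the detached part grows only linearly in $|D|$. So maximizing the total forces $|D|$ to be as small as possible, i.e. $|D|$ equals the minimum vertex cover number of $G$.

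Concretely I would set up a target value $k$: given the MinVC instance and a candidate cover size $t$, compute $k(t) = |D| \cdot \binom{\ell}{2}$-type term $+\binom{1+(|V|-t)\ell}{2}$ (the exact constants being routine to write down from the path lengths in the construction), and show MinVC has a vertex cover of size $\leq t$ iff MEC on $G_C$ admits a solution of transitive closure $\geq k(t)$. One direction is the ``detach the cover'' construction, exactly paralleling Lemma~\ref{lem:MCCDisjPath1}(1); the other is the normalization-plus-monotonicity argument above. Finally, note $G_C$ has distance $1$ to disjoint paths, and as for MCC this extends to pathwidth and distance to interval graphs in the sense of~\cite{DBLP:conf/mfcs/KomusiewiczN12}.

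The main obstacle I anticipate is making the monotonicity argument airtight: I must be careful that an optimal MEC solution does not gain by, say, cutting inside a path to split off a small colorful fragment and reattach part of it elsewhere, or by leaving a path attached to $r$ but cutting it so that only a colorful prefix stays with $r$. The normalization lemma has to handle all such local moves and show none of them can beat the ``detach whole paths along a minimum cover'' strategy. This is essentially a careful case analysis on where cuts fall within each path $P_i$, exploiting that all the edge-color vertices on $P_i$ are ``below'' the vertex colored $c_i$ so that any useful colorful component containing $r$ and part of $P_i$ can be taken to contain all of $P_i$; I expect this to be the one genuinely delicate step, the rest being bookkeeping.
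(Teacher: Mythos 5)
Your overall plan---reuse the MCC path-gadget attached to a fresh-colored root and argue that an optimal MEC solution detaches exactly the paths of a minimum vertex cover---is the right family of ideas; the paper does the complementary thing, reducing from \textsc{Maximum Independent Set} so that the paths \emph{kept} in the root's component form an independent set, and that difference is immaterial. The genuine gap is that you commit to using ``the same graph'' $G_C$ as in the MCC reduction and then base the whole monotonicity argument on the assumption that ``each path has the same number of vertices, say $\ell$.'' In the MCC gadget $P_i$ has $1+\deg(v_i)$ vertices, so the lengths differ, the transitive closure of a solution is \emph{not} a function of the number $|D|$ of detached paths alone, and maximizing the closure does not reduce to minimizing $|D|$: detaching a high-degree path removes more vertices from the big root component than detaching a low-degree one, so the optimum may prefer a larger cover consisting of low-degree vertices.

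The second gap is the normalization step you yourself flag as delicate but do not close, and your sketch of it is actually false: moving a fragment of $a$ vertices out of the root's component (of size $R$) into a detached component (of size $b$) changes the closure by $-a(R-b)$, so ``pushing a cut up to the root edge'' strictly \emph{decreases} the closure whenever $R>b$. Consequently a solution can genuinely profit from keeping a colorful prefix of $P_i$ (everything above the first conflicting edge-color vertex) inside the large root component, and you need a quantitative argument to bound how much this can help. The paper resolves both problems with one modification you omit: each $P_i$ is padded with freshly and distinctly colored vertices to a total length of $n^3$. This equalizes all path lengths and makes the combined contribution of every non-padding prefix ($O(n^2)$ vertices in total) negligible against even one full path, so a direct counting argument shows that a solution with closure at least the target forces the root component to contain at least $k$ \emph{complete} paths, which then form an independent set. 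Without that padding the comparison of closure values you need does not go through.
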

\appendixproof{Theorem~\ref{thm:MECDIsPath}}
{
\begin{proof}
The results follows from a proof similar to that of Section~\ref{sec:MCCDisjPath}.
We prove that MEC is not in XP when parameterized by the 
\emph{Distance to Disjoint Paths} number $d$, 
even when the input graph is a tree.
We give a reduction from \textsc{Maximum Independent Set} (MaxIS) to MEC on trees.

Consider an instance $G=(V,E)$ of MaxIS, and let $G_C=(V_C,E_C)$ be the corresponding instance of MEC. $G_C$ is a rooted tree,  
defined as follows.
First, we define $|V|$ paths, one for each vertex in $G$. Path $P_i$ contains vertex $v_{c,i}$, colored by $c_i$, 
and vertices $e_{c,i,j}$, for each
$\{v_i,v_j\} \in E$, colored by $c_{ij}$, followed by a path $P_{A,i}$ so that $P_i$ consists of 
$n^3$ vertices, each colored with a distinct color $c_{a,i}$. 
Notice that vertices $e_{c,i,j}$ appears in $P_i$ based on the lexicographic order
of the corresponding edges of $G$. 
Moreover, there exist two vertices associated with edge $\{v_i,v_j\} \in E$, namely $e_{c,i,j}$ (in $P_i$) and $e_{c,j,i}$ (in $P_j$), 
which are both colored by $c_{ij}$.
The tree $G_C$ is obtained by connecting the paths $P_1, \dots P_{|V|}$ to a root $r$, which is colored by $c_r$, 
where $c_r$ is a fresh new color.

\begin{lemma}
\label{lem:MECDisjPath1}
Let $G=(V,E)$ be an instance of MaxIS, and let $G_C=(V_C,E_C)$ be the corresponding instance of MEC.
Then (1) given an independent set of $G$ of size $k$, we can compute in polynomial time a solution of MEC over instance $G_C$
having transitive closure of size at least $\frac{kn^3(kn^3+1)}{2}+(n-k)\frac{n^3(n^3-1)}{2}$; 
(2) given a solution of MEC over instance $G_C$ having  a transitive closure of size $\frac{kn^3(kn^3+1)}{2}+(n-k)\frac{n^3(n^3-1)}{2}$,
we can compute in polynomial time an independent set of $G$ of size at least $k$.
\end{lemma}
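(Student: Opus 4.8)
The plan is to read both implications off the combinatorial structure of feasible solutions of MEC on the tree $G_C$. Throughout I write $N=n^3$ for the common size of the paths $P_i$, write $T=1+nN$ for the total number of vertices of $G_C$, and abbreviate $V_j:=\binom{1+jN}{2}+(n-j)\binom{N}{2}$; a direct expansion shows $\binom{1+jN}{2}=\frac{jn^3(jn^3+1)}{2}$ and $(n-j)\binom{N}{2}=(n-j)\frac{n^3(n^3-1)}{2}$, so the number appearing in the statement is exactly $V_k$. The structural fact I would isolate first is this: since $G_C$ is a tree whose only vertex common to two distinct paths $P_i,P_j$ is $r$, every component of a feasible solution that avoids $r$ is a sub-path of a single $P_i$ (hence has at most $N$ vertices), while the component $R$ that contains $r$ meets each $P_i$ in a prefix of some length $\ell_i\in\{0,\dots,N\}$. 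Consequently the transitive closure of a feasible solution equals $\binom{|R|}{2}+\sum_t\binom{n_t}{2}$, where the $n_t\leq N$ are the sizes of the components other than $R$, and these sizes together with $|R|$ sum to $T$.

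For implication (1) I would, given an independent set $I$ of $G$ with $|I|=k$, delete the edge $\{r,v_{c,i}\}$ for each $i\notin I$. The resulting non-root components are exactly the paths $P_i$ with $i\notin I$, each colorful because the colors along a single path are pairwise distinct; and $R=\{r\}\cup\bigcup_{i\in I}P_i$ has $1+kN$ vertices. The only colors that could repeat in $R$ are the $c_{ij}$ with $\{v_i,v_j\}\in E$, but such a color occurs in $R$ only at $e_{c,i,j}$ (present iff $i\in I$) and $e_{c,j,i}$ (present iff $j\in I$), and $I$ independent rules out $i,j\in I$ simultaneously; all remaining colors ($c_r$, the $c_i$, the private colors of the $P_{A,i}$) are distinct. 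Hence $R$ is colorful and the transitive closure equals $\binom{1+kN}{2}+(n-k)\binom{N}{2}=V_k$, which is what is claimed.

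For implication (2) I would take a feasible solution whose transitive closure equals $V_k$, let $R$ be its root component and $\ell_i$ the prefix length $R$ cuts from $P_i$, and set
\[ S=\{\, i : \ell_i\geq 1+\deg_G(v_i)\,\}, \]
i.e. the indices whose path has its whole ``head'' $v_{c,i}$ followed by all the $e_{c,i,j}$ inside $R$. Two claims then finish the argument. First, $S$ is independent in $G$: if $v_i,v_j$ were adjacent with $i,j\in S$, then $e_{c,i,j}$ and $e_{c,j,i}$, both of color $c_{ij}$, would both lie in $R$ (each sits among the first $1+\deg$ vertices of its path), contradicting colorfulness of $R$. Second, writing $s=|S|$, the component $R$ is not too big: for $i\notin S$ we have $\ell_i\leq\deg_G(v_i)$, so $|R|=1+\sum_i\ell_i\leq 1+sN+\sum_i\deg_G(v_i)\leq 1+sN+2|E|\leq(s+1)N$, since the total conflict mass $2|E|$ is at most $n(n-1)<N=n^3$ (the case $n=1$ being trivial). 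Thus $|R|\leq(s+1)N$, every other component has size $\leq N$, and all component sizes sum to $T=1+nN$; since $x\mapsto\binom{x}{2}$ is convex, $\binom{|R|}{2}+\sum_t\binom{n_t}{2}$ is then maximal when one component has size $(s+1)N$, as many others as possible have size $N$, and one leftover vertex remains, which gives
\[ V_k=\binom{|R|}{2}+\sum_t\binom{n_t}{2}\leq\binom{(s+1)N}{2}+(n-s-1)\binom{N}{2}=V_{s+1}-(s+1)N<V_{s+1}. \]
As $j\mapsto V_j$ is strictly increasing (one checks $V_j-V_{j-1}=(j-1)N^2+N>0$), $V_k<V_{s+1}$ forces $k\leq s$, i.e. $|S|\geq k$; and $S$ is obtained from the solution in polynomial time (compute the components, locate $R$, read off the $\ell_i$). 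That $S$ is the required independent set. (The only degenerate situation, $s\geq n$, forces $G$ to be edgeless, and then $S=\{1,\dots,n\}$ already has size $n\geq k$.)

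I expect the real work to be the upper bound in (2). The tempting guess that the optimum of MEC on $G_C$ equals $V_{\alpha(G)}$ (glue the paths of a maximum independent set at $r$) is \emph{false}: partially pulling the head of a further path into $R$ strictly increases the transitive closure, so the optimum sits strictly between $V_{\alpha(G)}$ and $V_{\alpha(G)+1}$. What makes the reduction go through is exactly the choice $N=n^3$: the conflict mass $\sum_i\deg_G(v_i)=2|E|$ that one may ever smuggle into $R$ beyond the clean independent-set part is below $N$, so $|R|$ stays below $(s+1)N$, and the slack $(s+1)N$ between $V_{s+1}$ and the best achievable value is wide enough to keep the thresholds $V_j$ separated. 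Everything after that is bookkeeping with binomial coefficients.
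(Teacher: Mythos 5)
Your proof is correct and follows essentially the same route as the paper's: part (1) is identical, and in part (2) both arguments bound the root component's size by (number of ``fully absorbed'' paths)$\cdot n^3$ plus an $O(n^2)$ term from partial prefixes, then conclude by a counting comparison that at least $k$ paths must be fully absorbed, which yields the independent set via the $c_{ij}$ color conflicts. Your version replaces the paper's ``for $n$ sufficiently large'' comparison with an explicit convexity/monotonicity argument on the thresholds $V_j$ (and an explicit degree threshold defining $S$ in place of the paper's normalization step), which tightens the bookkeeping but is not a different method.
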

\begin{proof}
(1) Consider an independent set $V'$ of $G$, with $|V'|=k$. Define a solution $G'_C$ of MEC over instance $G_C$ as follows.
For each vertex $v_i \in V\setminus V'$, cut the edge $\{ r,v_{c,i} \} \in E_C$ such that $P_i$ becomes a connected component
disconnected from $r$. Notice that each connected component is colorful. 
Indeed, each $P_i$ is colorful by construction. Consider the component $T$ containing the root $r$.
Notice that $T$ is colorful, since if two paths $P_i$ and $P_j$ are connected to $r$, then, by the property
of $V'$, $\{v_i,v_j\} \notin E$.
Moreover, the connected component that includes $r$ contains $kn^3+1$ vertices, hence it has transitive closure 
of size at least $\frac{kn^3(kn^3-1)}{2}$. Each of the other $(n-k)$ component consists of 
$\frac{n^3(n^3-1)}{2}$ edges.

(2) Consider a solution $G'_C$ of MEC over instance $G_C$ having a transitive closure of size at least 
$\frac{kn^3(kn^3-1)}{2}+(n-k)\frac{n^3(n^3-1)}{2}$.
We claim that the connected component $K_R$ including the root contains at least $k$ paths $P_i$. 
Notice that we assume that
if the first vertex of $P_{A,i}$ (the one connected to a vertex not in $P_{A,i}$) belongs to $K_R$, 
then we can easily extend the solution so that $K_R$ includes every vertex of $P_i$.
Assume to the contrary that $K_R$ includes $h<k$ path $P_i$. 
It follows that
$K_R$ contains at most $hn^3+n^2+1$ vertices. It follows that the transitive closure of such a solution
contains at most $\frac{(hn^3+n^2)(hn^3+n^2+1)}{2}+(n-h)\frac{n^3(n^3-1)}{2}$ edges. 

Since $h^2n^6+2hn^5+hn^3+1+h(n^3(n^3-1))<k^2n^6+kn^3+k(n^3(n^3-1)$,
for a value of $n$ sufficiently large, it holds that 
$\frac{(hn^3+n^2)(hn^3+n^2+1)}{2}+(n-h)\frac{n^3(n^3-1)}{2}<\frac{kn^3(kn^3-1)}{2}+(n-k)\frac{n^3(n^3-1)}{2}$.

Hence $K_R$ must include $k$ path $P_i$. 
Denote by $P'_i$ the path consisting of $r$ and path $P_i$.
Consider the
the paths $P'_i$ and $P'_j$,  with $\{v_i,v_j\} \in E$. By construction both paths contain a vertex 
colored by $c_{ij}$, hence one edge of the paths $P'_i$ and $P'_j$ must be cut
and $P_i$, $P_j$ cannot both be part of $K_R$.
Hence, we can define an independent set $V'$ of $G$ as follows:

\[
V'=\{  v_i:\text{ $P_i$ belongs to $K_R$} \}
\]
\end{proof}

Notice that the graph $G_C$ has distance $1$ to Disjoint Path, since it is enough to remove the root to obtain
$|V|$ disjoint paths. Moreover, by Lemma~\ref{lem:MECDisjPath1} and, by the NP-hardness of MaxIS, the result follows.

\end{proof}
}

%
%
%
%
%

\section{Conclusion}
We have considered two variants of the problem of finding colorful components inside a graph, and we have studied their parameterized and approximation complexity, for general and restricted instances. 
In the future, we aim at refining the parameterized complexity analysis, for example deepen the 
structural results for MCC and MEC.
Moreover, it would be interesting to study the parameterized complexity of the two problems under other meaningful parameters
in the direction of parameterizing above a guaranteed value~\cite{DBLP:journals/jal/MahajanR99}.
For example, in the case of MEC, one could compute in polynomial time a matching $M$ of the input graph. 
Since no edge with both endpoints with the same color exists, this matching is a feasible solution of MEC. 
As a consequence, the optimum solution is always bigger than $|M|$. 
The parameterized complexity of this problem with respect to the parameter "difference between the optimum and the size of the matching" could be interesting and more meaningful than the parameter value of the optimum since it informally represents the "hard" part of the problem.



\section*{References}

\bibliographystyle{elsarticle-num}

\bibliography{biblio}

\end{document}